\documentclass{LMCS}

\def\doi{7 (3:17) 2011}
\lmcsheading%
{\doi}
{1--33}
{}
{}
{Jun.~14, 2010}
{Sep.~26, 2011}
{}   
\pdfoutput=1
\newif\iftr\trfalse
\newif\ifdraft\draftfalse

\usepackage{stmaryrd,latexsym,amsmath,amssymb}
\usepackage{graphicx}
\usepackage{listings}
\usepackage{utf8}

\usepackage{enumerate}
\usepackage{hyperref}

\usepackage{wrapfig}
\usepackage{subfig}
\usepackage[T1]{fontenc}
\usepackage{textcomp}
\usepackage{url} 

\usepackage[scaled=0.8]{beramono}

\usepackage{color}
\usepackage{mathpartir}

\ifdraft
\newcommand{\draftnote}[1]{\marginpar{\sf \footnotesize{#1}}}
\else
\newcommand{\draftnote}[1]{}
\fi

\usepackage{declarations}

\lstdefinelanguage{WHILE}{
  keywords={output, input, else,  string, max, mod, div, to, int, while, if, 
    then, newkey, for, on, do, skip, declassify, endorse },
}
\lstset{
  basicstyle=\footnotesize\ttfamily,
  keywordstyle=\footnotesize\bfseries,
  numberstyle=\ttfamily,
  tabsize=4,
  captionpos=b,
  frame=lines,
  language=WHILE,
  backgroundcolor=\color[rgb]{1,0.98,.98},
  breaklines=false,
  breakautoindent=false,
  postbreak=\space,
  breakindent=5pt,
  escapeinside={/*@}{@*/}, 
  aboveskip=3pt,
  belowskip=3pt,
  belowcaptionskip=0pt,
  morecomment=[l]{//},
  mathescape=true
}

\newcommand\Seg{\ensuremath{\mathrm Seg}}

\usepackage{ttquot}
\usepackage{tightheaders}
\iftr
\pagestyle{plain}
\fi
\begin{document}

\title{Attacker Control and Impact for Confidentiality and Integrity}
\iftr
\author{Aslan Askarov \quad Andrew C. Myers \\
Department of Computer Science,  Cornell University
}
\else
\author[A.~Askarov]{Aslan Askarov}
\address{Department of Computer Science,  Cornell University}
\email{aslan@cs.cornell.edu and andru@cs.cornell.edu}
\author[A.~C.~Myers]{Andrew C. Myers}
\fi

\begin{abstract}
Language-based information flow methods offer a principled way to
enforce strong security properties, but enforcing noninterference
is too inflexible for
realistic applications. Security-typed languages have therefore
introduced declassification mechanisms for relaxing confidentiality
policies, and endorsement mechanisms for relaxing integrity policies.
However, a continuing challenge has been to define what security is
guaranteed when such mechanisms are used.
This paper presents a new semantic framework for expressing security
policies for declassification and endorsement in a language-based
setting. 
The key insight is that security can be characterized in terms of
the influence that declassification and endorsement allow to the attacker.
The new framework introduces two notions of security to
describe the influence of the attacker. Attacker
control defines what the attacker is able to learn from
observable effects of this code; attacker impact captures the
attacker's influence on trusted locations.
This approach yields novel security conditions for checked
endorsements and robust integrity. The framework is flexible enough to
recover and to improve on the previously introduced notions of
robustness and qualified robustness.  Further, the new security
conditions can be soundly enforced by a security type system. The
applicability and enforcement of the new policies is illustrated
through various examples, including data sanitization and
authentication.

\end{abstract}

\keywords{Security type system, information flow, noninterference, 
confidentiality, integrity, robustness, downgrading, declassification, 
endorsement, security policies}
\subjclass{D.3.3, D.4.6}

\maketitle

\section{Introduction}

Many common security vulnerabilities can be seen as violations of
either confidentiality or integrity. As a general way to prevent these
information security vulnerabilities, information flow control has
become a popular subject of study, both at the language
level~\cite{Sabelfeld:Myers:JSAC} and at the operating-system
level (e.g.,~\cite{MR92,asbestos,dstar}).
The language-based approach holds the appeal that
the security property of noninterference~\cite{Goguen:Meseguer:Noninterference},
can be provably enforced using a type
system~\cite{Volpano:Smith:Irvine:Sound}.
In practice, however, noninterference is too rigid: many programs considered
secure need to violate noninterference in limited ways.

Using language-based downgrading mechanisms such as 
_declassification_~\cite{ml-ifc-97,pottier00}
and _endorsement_~\cite{Oerbaek:Palsberg:Trust,zznm02},
programs can be written in which
information is intentionally released, and in which untrusted
information is intentionally used to affect trusted information
or decisions. Declassification relaxes confidentiality policies,
and endorsement relaxes integrity policies.  Both endorsement and
declassification have been essential for building realistic
applications, such as various applications
built with Jif~\cite{Myers:POPL99,jif}: games~\cite{as05},
a voting system~\cite{Clarkson:Chong:Myers:Oakland08},
and web applications~\cite{Chong+:SOSP07}.

A continuing challenge is to understand what security is obtained when
code uses downgrading. This paper contributes
a more precise and satisfactory answer to this question, particularly
clarifying how the use of endorsement weakens confidentiality.
While much
work has been done on declassification (usefully summarized by Sands and
Sabelfeld~\cite{Sabelfeld:Sands:JCS}),
there is comparatively little work on the
interaction between confidentiality and endorsement.

To see such an interaction,
consider the following notional code example, in which a service
holds both old data ("old_data") and new data ("new_data"), but
the new data is not to be released until time "embargo_time".
The variable "new_data" is considered confidential,
and must be declassified to be released:

\begin{lstlisting}
if request_time >= embargo_time
  then return declassify(new_data)
  else return old_data
\end{lstlisting}

\noindent
Because the requester is
not trusted, the requester must be treated as a possible attacker.
Suppose the requester has control over the variable "request_time",
which we can model by considering that variable to be low-integrity.
Because the intended security policy
depends on "request_time", the attacker controls the
policy that is being enforced, and can obtain the confidential new
data earlier than intended. This example shows that the integrity of
"request_time" affects the confidentiality of "new_data". Therefore,
the program should be considered secure only when the guard expression,
"request_time >= embargo_time", is high-integrity.

A different but reasonable security policy is that the requester may
specify the request time as long as the request time is in the past.
This policy could be enforced in a language with endorsement by first
checking the low-integrity request time to ensure it is in the past;
then, if the check succeeds, endorsing it to be high-integrity and
proceeding with the information release. The explicit endorsement is
justifiable because the attacker's actions are permitted to affect the
release of confidential information as long as adversarial inputs have
been properly sanitized. This is a common pattern in servers that
process possibly adversarial inputs.

_Robust declassification_ has been introduced in prior
work~\cite{zm01b,Myers:Sabelfeld:Zdancewic:JCS06,Chong:Myers:CSFW06} as a
semantic condition for
secure interactions between integrity and confidentiality.  The prior
work also develops type systems for enforcing robust declassification,
which are implemented as part of Jif~\cite{jif}. However, prior
security conditions for robustness are not satisfactory, for two reasons.
First, these
prior conditions characterize information security only for
terminating programs. A program that does not terminate is automatically
considered to satisfy robust declassification, even if it releases
information improperly during execution. Therefore
the security of programs that do not terminate, such as servers,
cannot be described.
A second and perhaps even more serious limitation is that prior
security conditions largely ignore the
possibility of endorsement, with the exception of _qualified
robustness_~\cite{Myers:Sabelfeld:Zdancewic:JCS06}. Qualified 
robustness gives the "endorse"
operation a somewhat ad-hoc, nondeterministic semantics, to
reflect the attacker's ability to choose the
endorsed value. This approach 
operationally models what the attacker can do, but does
not directly describe the attacker's control over
confidentiality. The introduction of nondeterminism also makes the
security property possibilistic. However, possibilistic security
properties have been criticized because they can weaken under
refinement~\cite{Roscoe95,SV98}.

The main contribution of this paper is a general, language-based
semantic framework for expressing information flow security and
semantically capturing the ability of the attacker to influence both
the confidentiality and integrity of information. The key building
blocks for this semantics are _attacker knowledge_~\cite{Askarov:Sabelfeld:SP07}
and its (novel) dual, _attacker impact_, which respectively
describe what attackers can know and what
they can affect. Building upon attacker knowledge, the interaction of
confidentiality and integrity, which we term _attacker control_, can
be characterized formally. The robust interaction of confidentiality
and integrity can then be captured cleanly as a constraint on attacker
control. Further, endorsement is naturally represented in this
framework as a form of attacker control, and a more satisfactory
version of qualified robustness can be defined.
All these security conditions can be
formalized in both _progress-sensitive_ and _progress-insensitive_
variants, allowing us to describe the security of both terminating
and nonterminating systems.

We show that the progress-insensitive variants of these improved
security conditions are enforced soundly by a simple security type
system.  Recent versions of Jif have added a _checked
endorsement_ construct that is useful for expressing complex security
policies~\cite{Chong+:SOSP07}, but whose semantics were not precisely
defined; this paper gives semantics, typing rules and a semantic
security condition for checked endorsement, and shows that checked endorsement
can be translated faithfully into simple endorsement at both the
language and the semantic level.
Our type system can easily be adjusted to enforce the progress-sensitive variants of the security conditions, as has been shown in the literature~\cite{Volpano:Smith:Probabilistic:CSFW,ONeil+:CSFW06}.

The rest of this paper is structured as follows.
Section~\ref{sec:semantics} shows how to define information security
in terms of attacker knowledge. Section~\ref{sec:attacks} introduces
attacker control. Section~\ref{sec:robust} defines 
progress-sensitive and progress-insensitive robustness using the new
framework. Section~\ref{sec:endorsement} extends this to improved
definitions of robustness that allow endorsements, generalizing
qualified robustness. A type system for enforcing these robustness
conditions is presented in Section~\ref{sec:enforcement}.
The checked endorsement construct appears in
Section~\ref{sec:checked}, which introduces a new notion of
robustness that allows checked endorsements, and shows that it
can be understood in terms of robustness extended
with simple endorsements. Section~\ref{sec:attacker-impact} introduces
attacker impact.
Additional examples are presented in Section~\ref{sec:examples},
related work is discussed in Section~\ref{sec:related}, and
Section~\ref{sec:conclusion} concludes.

This paper is an extended version of a previous paper by the same
authors~\cite{am10}. The significant changes
include proofs of all the main theorems, a semantic rather than
syntactic definition of fair attacks, and a renaming of
``attacker power'' to ``attacker impact''.
\section{Semantics }
\label{sec:semantics}
 
\paragraph{Information flow levels}
We assume two security levels for confidentiality --- _public_ and
_secret_ --- and two security levels for integrity --- _trusted_ and
_untrusted_. These levels are denoted respectively $\Public, \Secret$
and $\Trusted, \Untrusted$.  We define information flow ordering
$\sqsubseteq$ between these two levels: $\Public \sqsubseteq \Secret$,
and $\Trusted \sqsubseteq \Untrusted$. The four levels define a
security lattice, as shown on Figure~\ref{fig:lattice}. Every point on
this lattice has two security components: one for confidentiality, and
one for integrity.
We extend the information flow ordering to elements on this lattice: $\level_1 \sqsubseteq \level_2$ if the ordering holds between the corresponding components. 
As is standard, we define _join_ $\level_1 \join \level_2$  as the
least upper bound of $\level_1$ and $\level_2$, and _meet_ $\level_1
\sqcap \level_2$ as the greatest lower bound of $\level_1$ and
$\level_2$. 
All four lattice elements are meaningful; for example, it is possible
for information to be both secret and untrusted when it depends on
both secret and untrusted (i.e., attacker-controlled) values.
This lattice is the simplest possible choice for exploring the topics
of this paper; however, the results of this paper
straightforwardly generalize to the richer security lattices used in other work on robustness~\cite{Chong:Myers:CSFW06}.

\begin{figure}[t] %
\begin{minipage}{0.41\textwidth}
\begin{center}
\includegraphics{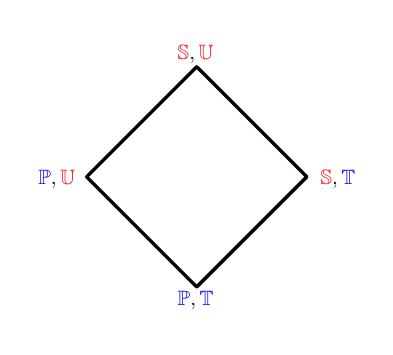}
\end{center}
\caption{Information flow lattice}
\label{fig:lattice}
\end{minipage}
\begin{minipage}{0.55\textwidth}
\vspace{9.5ex}
\begin{align*}
e ::=\; & n\ |\ x\ |\ e\ \mathit{op}\ e\ %
 \\
c::=\; & \Skip\ |\ x := e \ |\ c;c\ \\ &|\ \ifThenelse{e}{c_1}{c_2}\   \ |\  \while{e}{c}\    %
\end{align*}
\caption{Syntax of the language}
\label{fig:syntax}
\end{minipage}
\end{figure}

\paragraph{Language and semantics}

\begin{figure}
\begin{mathpar}
\inferrule{}{\configTwo{n}{m} \downarrow {n}} 
\and
\inferrule{}{\configTwo{x}{m} \downarrow {m(x)}}
\and
\inferrule{\configTwo{e_1}{m} \downarrow {v_1} \and 
  \configTwo{e_2}{m} \downarrow {v_2} \and \
  v = v_1\ \mathbf{op}\ v_2
} {
  \configTwo{e_1\ \mathit{op}\ e_2}{m} \downarrow {v} }
\end{mathpar}
\caption{Semantics of expressions}
\label{fig:sem:expressions}
\begin{mathpar}
   \inferrule{}{ \configTwo{\Skip}{m} \labarrow{}{}  \configTwo{\halt}{m}}
\and 
    \inferrule
{\configTwo{e}{m} \downarrow 
  {v}
}{\configTwo{x:=e}{m}\labarrow{}{(x,v)} \configTwo{\halt}{m[x\mapsto v]}}
\and
\inferrule{ \configTwo{c_1}{m} \labarrow{}{t} \configTwo{c'_1}{m'} 
}
{\configTwo{c_1; c_2}{m} \labarrow{}{t} \configTwo{c'_1; c_2}{m'}}
\and
\inferrule{ \configTwo{c_1}{m} \labarrow{}{t} \configTwo{\halt}{m'} 
}{\configTwo{c_1; c_2}{m}  \labarrow{}{t} \configTwo{c_2}{m'}}
\and
\inferrule{ \configTwo{e}{m} \downarrow {n}  \\ n \neq 0}{
  \configTwo{\ifThenelse{e}{c_1}{c_2}}{m} \labarrow{}{} \configTwo{c_1}{m}}
\and
\inferrule{ \configTwo{e}{m} \downarrow {n}
 \\ n = 0}{\configTwo{\ifThenelse{e}{c_1}{c_2}}{m} \labarrow{}{} \configTwo{c_2}{m}}
\and
\inferrule{  \configTwo{e}{m} \downarrow {n}  \\ n \neq 0}{
  \configTwo{\while{e}{c}}{m} \labarrow{}{} \configTwo{c; \while{e}{c}}{m}}
\and
\inferrule{  \configTwo{e}{m} \downarrow {n}  \\ n = 0}{
  \configTwo{\while{e}{c}}{m} \labarrow{}{} \configTwo{\halt}{m}}
\end{mathpar}
\caption{Semantics of commands}
\label{fig:sem:commands}
\end{figure}

We consider a simple imperative language with syntax presented in Figure~\ref{fig:syntax}.
The semantics of the language is fairly standard and is given 
in  Figures~\ref{fig:sem:expressions} and~\ref{fig:sem:commands}.
For expressions, we define big-step evaluation of the form 
 $\configTwo{e}{m} \downarrow v$, where $v$ is the result of evaluating expression $e$ in memory $m$. 
For commands, we define a small-step operational semantics, in which 
 a single transition is written as $\configTwo{c}{m} \labarrow{}{t} \configTwo{c'}{m'}$,
where $c$ and $m$ are the initial command and memory, and $c'$ and $m'$ are the resulting command and memory.
The only unusual feature is the annotation $t$ on
each transition, which we call an _event_.
Events record assignments: an assignment to variable $x$ of value $v$ is recorded by an event $(x,v)$. 
This corresponds to our attacker model, in which the attacker may only observe assignments to public variables.
We write $\configTwo{c}{m}
\labarrowstar{}{\vec t}  $ to mean that trace  $\vec t$ is produced starting from
$\configTwo{c}{m}$ using zero or more transitions.
Each trace $\vec t$ is composed of individual events $t_1 \tconcat t_2 \cdots t_k \cdots$, and 
a \emph{prefix} of $\vec t$  up to the $i$-th event  is denoted
as~$\vec{t_i}$; we use the operator $\;\tconcat\;$ to denote the concatenation of two traces or events.
If a transition does not affect memory, its event is _empty_, which is
either written as ε or is omitted, e.g.: 
$\configTwo{c}{m} \labarrow{}{} \configTwo{c'}{m'}$. 

Finally, we assume that the _security environment_ $\G$ maps program
variables to their security levels.
Given a memory $m$, we write $m_\Public$ for the public part of
the memory; similarly, $m_\Trusted$ is the trusted part of $m$. 
We write $m =_\Trusted m'$ when memories $m$ and $m'$ agree on their trusted parts, and $m =_\Public m'$ when $m$ and $m'$ agree on their public parts.

\subsection{Attacker knowledge}

This section provides background on the attacker-centric model for
information flow security~\cite{Askarov:Sabelfeld:SP07}. We recall
definitions of attacker knowledge, progress knowledge, and
divergence knowledge, and introduce
progress-(in)sensitive \emph{release events}.

\paragraph{Low events} Among the events that are generated during a trace, we distinguish a sequence of
low (or public) events.  Low events correspond to observations that an
attacker can make during a run of the program. We assume that the attacker
may observe individual assignments to public variables. Furthermore,
if the program terminates, we assume that a termination event
$\Downarrow$ may also be observed by the attacker. 
If attacker can detect divergence of programs (cf. Definition~\ref{def:knowledge:divergence}) then divergence $\Uparrow$ is also a low event.

Given a trace $\vec t$, low events in that trace are denoted as $\vec t_\Public$. A single low event is often denoted as $\ell$, and a sequence of low events is denoted as $\vec \ell$. We overload the notation for semantic transitions,  writing $\configTwo{c}{m} \labarrowstar{}{\vec \ell}$ if only low events produced from configuration $\configTwo{c}{m}$ 
are relevant; that is, there is a trace $\vec t$ such that $\configTwo{c}{m} \labarrowstar{}{\vec t} \land \vec t_\Public = \vec \ell$. 
Low events are the key element in the definition of \emph{attacker knowledge}~\cite{Askarov:Sabelfeld:SP07}.

The knowledge of the attacker is described by the set of initial memories compatible with low observations. Any reduction in  this set means the attacker has learned something about secret parts of the initial memory.

\begin{defi}[Attacker knowledge]
Given a sequence of low events $\vec \ell$, initial low memory $m_{\Low}$, and program~$c$,
 \emph{attacker knowledge} is
\[
k(c, m_{\Low}, \vec \ell) \defn \{ m' \ |\ m_\Low = m'_\Low \land  \configTwo{c}{m'} \labarrowstar{}{\vec \ell}  \}
\]
\end{defi}\vspace{6 pt}

\noindent Attacker knowledge gives a handle on what information the attacker learns with every low event.  The
smaller the knowledge set, the more precise is the attacker's information about secrets. 
Knowledge is monotonic in the number of low events: as the program
produces low events, the attacker may learn more about secrets.

Two extensions of attacker knowledge are useful: \emph{progress knowledge}~\cite{Askarov+:Termination,Askarov:Sabelfeld:CSF09}
and \emph{divergence knowledge}~\cite{Askarov+:Termination}.
\begin{defi}[Progress knowledge]
\label{def:knowledge:progress}
Given a sequence of low events $\vec \ell$, initial low memory $m_\Low$, and a program  $c$, 
define \emph{progress knowledge} $k_\rightarrow(c, m_\Low, \vec \ell) $ as 
\[
k_\to (c, m_\Low, \vec \ell) \defn \{ m' \ |\ m'_\Public = m_\Public \land \exists \ell'~.~
 \configTwo{c}{m'} \labarrowstar{}{\vec \ell} \configTwo{c''}{m''}
 \labarrowstar{}{\ell'}  \}
\]
\end{defi}\vspace{6 pt}

\noindent Progress knowledge represents the information the attacker 
obtains by seeing public events $\vec\ell$ followed by some other
public event.  Progress knowledge and attacker knowledge are related
as follows:
given a program $c$, memory $m$ and a sequence of  low events $\ell_1
\cdots \ell_{n}$ obtained from $\configTwo{c}{m}$, we have that for all $i < n$,
\[
k(c, m_\Public, \vec \ell_i) \supseteq k_\to (c, m_\Public, \vec \ell_i) \supseteq k(c, m_\Public, \vec \ell_{i+1})
\]
To illustrate this with an example,
consider program 
$\mathit{l} := 0; (\while{h = 0}{\Skip}); \mathit{l} := h$
with initial memory $m(h)=7$. This program produces a sequence of two
low events $(l,0) \tconcat (l,7)$.
The knowledge after the first event $k (c, m_\Public, (\mathit{l}, 0))$ is a set of all possible memories that agree with $m$ on the public parts and can produce the low event $(\mathit l, 0)$. Note that no low events are possible after the first assignment unless $h$ is non-zero. Progress knowledge reflects this: $k_\to (c, m_\Public, (\mathit{l}, 0))$ is a set of memories such that $h\neq 0$. Finally, the knowledge after two events 
$k (c, m_\Public, (\mathit{l}, 0) \tconcat (\mathit{l}, 7))$ is a set of memories where $h=7$. 

Using attacker knowledge, one can express many confidentiality policies~\cite{Banerjee+:SP08,Askarov:Sabelfeld:CSF09,Broberg:Sands:PLAS09}. For example,
a strong notion of \emph{progress-sensitive noninterference}~\cite{Goguen:Meseguer:Noninterference} can be expressed by demanding that knowledge between low events does not change: 
\[k(c, m_\Public, \vec \ell_i) = k(c, m_\Public, \vec \ell_{i+1})\]
Progress knowledge enables expressing more permissive policies, such
as \emph{progress-insensitive
  noninterference}, which allows leakage of information, but only via
termination channels~(in \cite{Askarov+:Termination} it is called
\emph{termination-insensitive}). This is expressed by requiring
equivalence of the progress knowledge after seeing $i$ events with the knowledge obtained after $i+1$-th event: 
 \[k_\to(c, m_\Public, \vec \ell_i) = k(c, m_\Public, \vec \ell_{i+1})\]
In the example $\mathit{l} :=0 ; (\while{h = 0}{\Skip}); \mathit{l} := 1$, the knowledge inclusion between the two events is strict: 
$k(c, m_\Public, (l,0)) \supset k(c, m_\Public, (l,0)\tconcat (l,1))$.
 Therefore, the example does not satisfy progress-sensitive noninterference. On the other hand,
the low event that follows the $\cod{while}$ loop does not reveal more information than the knowledge about the existence of that event. Formally, 
$k_\to(c, m_\Public, (l,0)) = k(c, m_\Public, (l,0)\tconcat(l,1))$, hence the program satisfies progress-insensitive noninterference. 

These definitions also allow us to reason about knowledge changes along \emph{parts of the traces}. We say 
that knowledge is preserved in a progress-(in)sensitive way along a part of a trace, assuming that the respective
knowledge equality holds for the low events that correspond to that
part.

Next, we extend possible observations to a divergence event $\Uparrow$
(we write $\configTwo cm\Uparrow$ to mean configuration $\configTwo cm$ diverges). For attackers that can observe program divergence $\Uparrow$, we
define knowledge on the sequence of low events that includes divergence:
\begin{defi}[Divergence knowledge]
\label{def:knowledge:divergence}
\[
k (c, m_\Low, \vec \ell \Uparrow) \defn \{ m' \ |\ m'_\Public = m_\Public \land
 \configTwo{c}{m'} \labarrowstar{}{\vec \ell} \configTwo{c''}{m''}
\land \configTwo{c''}{m''} \Uparrow \}
\]
\end{defi}\vspace{6 pt}

\noindent Note that the above definition does not require divergence immediately after $\vec \ell$ --- it allows for more low events to be produced after $\vec \ell$. Divergence
knowledge is used in Section~\ref{sec:releasecontrol}.

Let us consider events at which  knowledge preservation is broken. We call these events \emph{release events}.
\begin{defi}[Release events]
  Given a program $c$ and a memory $m$, such that 
\[\configTwo{c}{m} \labarrowstar{}{\vec \ell}
  \configTwo{c'}{m'} \labarrowstar{}{r}\] 
\begin{iteMize}{$\bullet$}
\item  $r$ is a \emph{progress-sensitive release event}, if  $k(c,m_\Public, \vec \ell) \supset k(c,m_\Public, \vec \ell \tconcat  r)$
\item  $r$ is a \emph{progress-insensitive release event}, if  $k_\to(c,m_\Public, \vec \ell) \supset k(c,m_\Public, \vec \ell \tconcat r)$
\end{iteMize}
\end{defi}\vspace{6 pt}

\noindent It is easy to validate that a progress-insensitive release
event is also a progress-sensitive event.  For example, in the program
$ \mathit{low}:=1; \mathit{low'}:=h$, the second assignment is both a
progress-sensitive and a progress-insensitive release event. The
reverse is not true --- in the program $\while{h=0}{\Skip};
\mathit{low}:=1$ the assignment to $\mathit{low}$ is a
progress-sensitive release event, but is not a progress-insensitive
release event.

\section{Attacks}
\label{sec:attacks}
To reason about program security  in the presence of active attacks, we introduce a formal model
of the attacker.  Our formalization follows that in~\cite{Myers:Sabelfeld:Zdancewic:JCS06}, where
attacker-provided code can be injected into the program.
This section provides  examples of how attacker-injected code may affect attacker
knowledge, followed by a semantic characterization of the attacker's influence on knowledge. 

First, we extend the syntax to allow execution of attacker-controlled code:
\begin{align*}
c[\vbullet] :: =\; & \dots \ |\ [\bullet]
\end{align*}

Next, we introduce notation $[\vec t]$ to highlight that the trace $\vec t$ is produced by
attacker-injected code.  
The semantics of the language is extended accordingly.
\begin{mathpar}
  \inferrule{ \configTwo{a}{m} \labarrow{}{t} \configTwo{a'}{m'}}{ \configTwo{[a]}{m} \labarrow{}{[t]} \configTwo{[a']}{m'}}
\and
  \inferrule{}{ \configTwo{[\halt]}{m} \labarrow{}{} \configTwo{\halt}{m}}
\end{mathpar}
We limit 
attacks that can be substituted into holes to so-called \emph{fair attacks}, which represent
reasonable limitations on the impact of the attacker.
Unlike earlier approaches,
where fair attacks are defined syntactically~\cite{Myers:Sabelfeld:Zdancewic:JCS06, Chong:Myers:CSFW06}, 
we define them semantically. This allows us to include a larger set of attacks. To ensure that we include all syntactic attacks we make use of
a reachability translation, explained below.

Roughly, we require a fair attack to not give new knowledge and to not modify trusted variables. 
A refinement of this idea is that an attack is fair if it gives new knowledge but only because
the reachability of the attack depends on a secret.
To capture this refinement,
we define an auxiliary translation to make reachability of attacks explicit.
We assume a trusted, public variable "reach" that does not appear in the source of
$c[\vbullet]$. Let operator $\treach$ be a source-to-source transformation of $c[\vbullet]$ 
that makes reachability of attacks explicit.
\begin{defi}[Explicit reachability translation]
  Given a program $c[\vbullet]$, define $(\treach(c[\vbullet]) $ as follows: 

\begin{iteMize}{$\bullet$}
\item 
$\treach([\bullet]) \Longrightarrow "reach":="reach" + 1; [\bullet]$
\item 
$\treach(c_1; c_2) \Longrightarrow \treach(c_1); \treach(c_2)$ 
\item 
$\treach(\ifThenelse{e}{c_1}{c_2})  \Longrightarrow \ifThenelse{e}{\treach(c_1)}{\treach(c_2)}$
\item 
$\treach(\while{e}{c}) \Longrightarrow \while{e}{\treach(c)}$
\item 
$\treach(c) \Longrightarrow c$ for all other commands $c$
\end{iteMize}

\end{defi}

The formal definition uses that any trace $\vec t$ can be represented as a sequence of subtraces
$\vec t_1 \tconcat [\vec t_2] \cdots 
\vec t_{2*n-1} \tconcat [\vec t_{2*n}]$, where even-numbered subtraces correspond to the events produced by attacker-controlled code. 

Given a trace $\vec t$, we denote the trusted events in the trace as $\vec t_\Trusted$. We use notation $t_\star$  for a single trusted event, and $\vec t_\star$
for a sequence of trusted events.

\begin{defi}[Fair attack]
\label{def:fairattack}
Given a program $c[\vbullet]$, such that $\treach(c[\vbullet]) \Longrightarrow c_\treached[\vbullet]$, 
say that $\vec a$ is a \emph{fair attack} on $c[\vbullet] $ if for all memories $m$,
such that $\configTwo{c_\treached[\vec a]}{m} \labarrowstar{}{\vec t}$ and 
 $\vec t = \vec t_1 \tconcat [\vec t_2] \cdots \vec t_{2*n-1} \tconcat [\vec t_{2*n}]$, i.e., there are $2n$ intermediate configurations
$\configTwo{c_j}{m_j}$,  $1 \leq j \leq 2n$, for which
\[
\configTwo{c_\treached[\vec a]}{m} \labarrowstar{}{\vec t_1} \configTwo{c_1}{m_1}
 \labarrowstar{}{[\vec t_2]}
\configTwo{c_2}{m_2}  \labarrowstar{}{\vec t_3} 
\dots 
 \labarrowstar{}{[\vec t_{2n}]}
\configTwo{c_{2n}}{m_{2n}} \dots 
\]
then for all $i$, $ 1 \leq i \leq n$, it holds that
$
k (c_\treached[\vec a], m, \vec t_1 \cdots \vec t_{2i-1}) 
=
k (c_\treached[\vec a], m, \vec t_1 \cdots  [\vec t_{2i}]) 
$ and $\vec{t_{2i}}_\star=\epsilon$. 

\end{defi}

For example, in the program $\ifThenelse{h>0}{[\bullet]}{\Skip}$ the attacks $a_1 = [\mathit{low}:=1]$
and attack $a_2 = [\mathit{low}:=h > 0]$ are fair, but attack $a_3 = [\mathit{low}:=h]$ is not.

\subsection{Examples of attacker influence}
This section presents a few examples of attacker influence on knowledge. We also introduce pure
availability attacks and progress attacks, to which we refer later in this section. 

In the examples below, 
we use notation $[(u,v)]$ when
a low event $(u,v)$ is generated by attacker-injected code. 

Consider program $[\bullet]; \mathit{low} := u > h; $ where $h$ is a secret variable, and $u$ is
an untrusted public variable. The attacker's code executes before the low assignment and may
change the value of $u$. Consider memory $m$, where $m(h)=7$, and the two attacks $a_1 = u := 0$ and $a_2 =
u:=10$. These attacks result in different values being assigned to variable $\mathit{low}$. The first
trace results in low events $ [(u,0)] \tconcat (\mathit{low}, 0)$, while the second trace results in low
events $[(u,10)] \tconcat (\mathit{low}, 1)$. Therefore, the knowledge about the secret is
different in each trace. We have
\begin{align*}
& k(c[a_1], m_\Public, [(u,0)] \tconcat (\mathit{low}, 0)) = \{ m' \ |\ m'(h) \geq 0 \} \\
& k(c[a_2], m_\Public, [(u,10)] \tconcat (\mathit{low}, 1)) = \{ m' \ |\ m'(h) < 10 \}   
\end{align*}
Clearly, this program gives the attacker some control over what information about secrets he
learns.  Observe that it is not necessary for the last assignment to differ in order for the
knowledge to be different. For example, consider attack $a_3 = u:=5$. This attack results in low
events $[(u, 5)] \tconcat (\mathit{low}, 0)$, which do the same assignment to $\mathit{low}$ as
$a_1$ does.  Attacker knowledge, however, is different from that obtained by $a_1$:
\[
 k(c[a_3], m_\Public, [(u,5)] \tconcat (\mathit{low}, 0)) = \{ m' \ |\ m'(h) \geq 5 \} \\
\]\vspace{2 pt}

\noindent Next, consider program $[\bullet]; \mathit{low} := h$. This program gives away knowledge about the value
of $h$ independently of untrusted variables. The only way for the attacker to influence what
information he learns is to prevent that assignment from happening at all, which, as a result, will
prevent him from learning that information. This can be done by an attack such as $a =
\while{\cod{true}}{\Skip}$, which makes the program diverge before the assignment is reached.
We call attacks like this \emph{pure availability attacks}.
Another example of a pure availability attack is in the program $[\bullet]; (\while{u = 0}{\Skip}) ;
\mathit{low}:= h$.  In this program, any attack that sets $u$ to 0 prevents the assignment from
happening.

Consider another example: $[\bullet]; \while{u < h'}{\Skip}; \mathit{low}:= 1$. As in the previous
example, the value of $u$ may change the reachability of $\mathit{low}:=1$. Assuming the attacker can observe divergence, this is not a
pure availability attack, because diverging before
the last assignment gives  the attacker additional secret information, namely that $ u <
h'$. New information is also obtained if the attacker sees the low assignment.  We name
attacks like this \emph{progress attacks}. In general, a progress attack is an attack that leads
to program divergence in a way that observing that divergence (i.e., detecting
there is no progress) gives new knowledge to the attacker.

\subsection{Attacker control}

We represent attacker control as a set of attacks that are similar in their
influence on knowledge. Intuitively, if a program leaks no information to
the attacker, the control  corresponds to all possible attacks. In general,  the more attacks are similar, the less influence the attacker has. Moreover, the  
control is a temporal property and depends on the trace that has been currently produced.
The longer a trace is, the more influence an attack may have, and the smaller the control set is.

\paragraph{Similar attacks} The key element in the definition of control is specifying when
two attacks are similar. Given a program $c[\vbullet]$, memory $m$, consider two attacks $\vec
a$ and $\vec b$ that produce traces $\vec t$ and $\vec q$ respectively:
\[
\configTwo{c[\vec a]}{m} \labarrowstar{}{\vec t}
\mbox{\ \  \ and\ \ \  } 
\configTwo{c[\vec b]}{m} \labarrowstar{}{\vec q}
\]
We compare~$\vec a$ and~$\vec b$ based on how they change attacker knowledge along their
respective traces. First, if knowledge is preserved
along a subtrace of one of the traces, say $\vec t$, it must be preserved along a subtrace of $\vec q$ as well. Second, 
if at some point in $\vec
t$ there is a release event $(x,v)$, there must be a matching low event $(x,v)$ in $\vec q$, and
the attacks are similar along the rest of the traces. 

Visually, this requirement is described by the two diagrams in Figure~\ref{fig:similarattacks}. 
Each diagram shows the change of knowledge
as more low events are produced. Here the
$x$-axis corresponds to low events, and the $y$-axis reflects the
attacker's uncertainty about initial secrets. Whenever one of the traces 
reaches a release event, depicted by vertical drops, there must be a corresponding low event in the other trace, such that the 
two events agree. This is depicted by the dashed lines between the two diagrams. 

\begin{figure}
\centering
\includegraphics[trim = 0 10 0 10, clip=true]{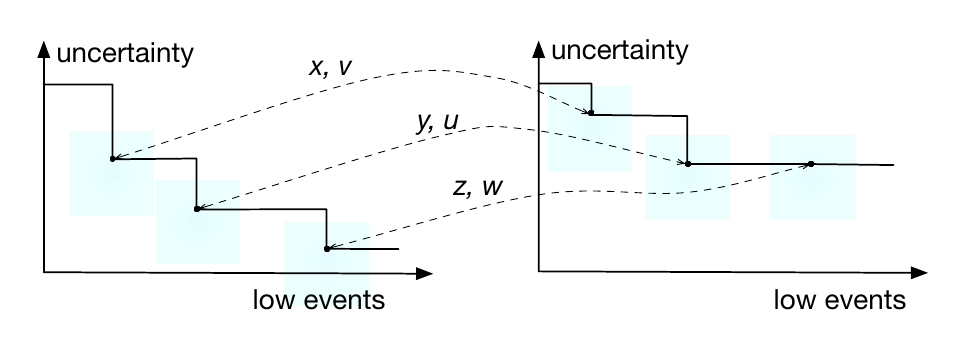}
\caption{Similar attacks and traces}
\label{fig:similarattacks}
\end{figure}

Formally, these requirements are stated using the following definitions. 

\begin{defi}[Knowledge segmentation] 
\label{def:segmentation}
Given a program $c$, memory $m$, and a trace $\vec t$,
 a sequence of indices $p_1 \dots p_N$ such that $p_1 < p_2 \dots < p_N$ and 
$\vec t_\Public = \ell_{1 \dots p_1} \tconcat \ell_{p_1+1  \dots p_2}  \cdots   \ell_{p_{N-1}+1 \dots p_N}$
is called 
\begin{iteMize}{$\bullet$}
\item \emph{progress-sensitive knowledge segmentation} of size $N$, if \\
$\forall j \leq N, \forall i~.~ p_{j-1}+1 \leq i < p_j~.~k(c, m_{\Public}, \vec \ell_i) = k(c, m_\Public, \vec \ell_{i+1})$,
denoted by \\ $\Seg(c,m,\vec t, p_1 \dots p_N)$. 

\item \emph{progress-insensitive knowledge segmentation} of size $N$ if \\
$\forall j \leq N, \forall i~.~ p_{j-1}+1 \leq i < p_j~.~ k_\to(c, m_\Public, \vec \ell_i) = k(c, m_\Public, \vec \ell_{i+1})$,
denoted by \\ $\Seg_\to(c,m,\vec t, p_1 \dots p_N)$. 

\end{iteMize}
Low events $p_i + 1$ for $ 1  \leq i < N$ are called \emph{segmentation events}.
\end{defi}
Note that given a trace, there can be more than one way to segment it, and for every trace consisting of $n$ low events, this can be trivially
achieved by a segmentation of size $n$. 
We use knowledge segmentation to define \emph{attack similarity}:
\begin{defi}[Similar attacks and traces $\sim^{c[\vbullet], m}$
] \label{def:similarattacks}
Given a program $c[\vbullet]$, memory $m$, and two attacks $\vec a$ and
$\vec b$ that produce traces $\vec t$ and $\vec q$, define $\vec a$ and $\vec
b$ as \emph{similar} along $\vec t$ and $\vec q$ for the
progress-sensitive attacker, if there are two segmentations  $ p_1 \dots
p_N$ and $ p'_1 \dots p'_N$ (for some $N$) such
that 
\begin{iteMize}{$\bullet$}
\item
$\Seg(c[\vec a], m, \vec t, p_1 \dots p_N)$, 
\item 
$\Seg(c[\vec b], m, \vec q, p'_1 \dots p'_N)$, and
\item
$\forall i~.~ 1 \leq i < N~.~ t_\Public_{p_i+1} = q_\Public_{p'_i+1}$.
\end{iteMize}
For the progress-insensitive attacker,
the definition is similar except that it uses
progress-insensitive segmentation $\Seg_\to$. 
If two attack--trace pairs are similar, we write
$(\vec a, \vec t) \sim^{c[\vbullet], m} (\vec b, \vec q)$ 
(for progress-insensitive similarity,
$(\vec a, \vec t) \sim^{c[\vbullet], m}_\to (\vec b, \vec q))$.
\end{defi}
The construction of Definitions~\ref{def:segmentation} and~\ref{def:similarattacks} can be illustrated
by program
\[
[\bullet]; \ifThenelse{u}{ ( \while{h \leq 100}{\Skip} )}{\Skip}; \mathit{low}_1 := 0; \mathit{low}_2:=
h > 100
\] 
Consider memory with $m(h) = 555$, and two attacks $a_1 = u:=1$, and $a_2 = u:=0$. 
Both attacks reach the assignments to low variables. However, for $a_2$ the assignment to
$\mathit{low}_2$ is a progress-insensitive release event, while for $a_1$ the knowledge changes
at an earlier assignment.

\paragraph{Attacker control}
We define attacker control with respect to an attack $\vec a$ and a trace $\vec t$ as the set of attacks that are similar to the given attack in its influence on knowledge.

\begin{defi}[Attacker control (progress-sensitive)]
\[
R(c[\vbullet], m, \vec a, \vec t) \defn  
\{ \vec b
\ |\ \exists\vec q~.~(\vec a, \vec t) \sim^{c[\vbullet], m} (\vec b, \vec q)
\}
\]
\end{defi}
To illustrate how attacker control changes, consider program 
\([\bullet]; \mathit{low} := u < h; \mathit{low'}:=h\) where $u$ is an untrusted variable and $h$ is a secret
trusted variable.
To understand attacker control of this program, we consider 
an initial memory $m(h) = 7$ and attack $ a = u := 5$.
The  low event
$(\mathit{low}, 1)$ in this trace is a release event. The attacker control  
is the set of all attacks that are similar to $a$ and trace $[(u := 5)], (\mathit{low},
1)$ in its influence on knowledge. This corresponds to attacks that set $u$ to values such that $u < 7$. The  assignment to $\mathit{low'}$
changes attacker knowledge as well, but the information that the attacker gets does not depend on the attack:
any trace starting in $m$ and reaching the second assignment produces the low event $(\mathit{low'}, 7)$; 
hence, the attacker control does not change at that event.

Consider the same example but with the two assignments swapped:
\([\bullet]; \mathit{low'}:=h; \mathit{low} := u < h \).
The assignment to $\mathit{low'}$ is a release event that the attacker cannot affect. 
Hence the control includes all attacks that reach this assignment. 
The result of the assignment to $\mathit{low}$ depends on $u$. However, this result does not
change attacker knowledge. Indeed, in this program, the second assignment is not a release
event at all. Therefore, the attacker control is simply all attacks that reach 
the first assignment.

\paragraph{Progress-insensitive control} For progress-insensitive security, attacker control
is defined similarly using the progress-insensitive comparison of attacks.
\begin{defi}[Attacker control (progress-insensitive)]
\[
R_\to(c[\vbullet], m, \vec a, \vec t) \defn  
\{ \vec b
\ |\ \exists\vec q~.~(\vec a, \vec t) \sim^{c[\vbullet], m}_\to (\vec b, \vec q))
\}
\]
\end{defi}
Consider program 
\( [\bullet]; \while{u < h}{\Skip}; \mathit{low}:=1 \). Here, any attack produces a trace that
preserves progress-insensitive noninterference. If the loop is taken, the program produces no low
events, hence, it gives no new knowledge to the attacker. If the loop is not taken, and the low
assignment is reached, this assignment preserves attacker knowledge in a progress-insensitive
way. Therefore, the attacker control is all attacks.

\draftnote{Because attacker control is based on the definition of knowledge segmentation, control necessary
changes at release points. --- there is a lemma about this... }

\section{Robustness}
\label{sec:robust}
 \paragraph{Release control}
\label{sec:releasecontrol}
This section defines \emph{release control} $R^\triangleright$, which
captures the attacker's
influence on release events. Intuitively, release control expresses the extent to which an attacker can affect the _decision_ to produce some release event.

\begin{defi}[Progress-sensitive release control ]
\label{def:control:release:sensitive}
\begin{align*}
R^\triangleright (c[\vbullet], m, \vec a, \vec t) \defn \, &
\{
\vec b\ |\ \exists \vec q~.~(\vec a, \vec t) \sim^{c[\vbullet], m} (\vec b, \vec q)
~\land \\
&\phantom{\{\vec b\ |\ }(\exists \vec{r'}~.~k(c[\vec b], m_\Public, \vec q_\Public)  \supset  k(c[\vec b], m_\Public, \vec q_\Public \tconcat \vec{r'}_\Public)  \\ 
&\phantom{\{\vec b\ |\ (}\lor k(c[\vec b], m_\Public, \vec q_\Public)  \supset k(c[\vec b], m_\Public, \vec q_\Public \Uparrow) \\ 
&\phantom{\{\vec b\ |\ (}\lor \configTwo{c[\vec b]}{m} \Downarrow   
)
\}
\end{align*}\smallskip
\end{defi}

\noindent The definition for release control is based on the one for attacker control with the three
additional clauses,
explained below. These clauses restrict the set of attacks to those
that either terminate or produce a release event. 
Because the progress-sensitive attacker can also learn new information by observing divergence, the 
definition contains an additional clause (on the third line) that uses divergence knowledge to reflect that.

\begin{figure}
\centering
\subfloat[Release control]{\label{fig:releasecontrol} \includegraphics[trim = 0 10 0 10, clip=true]{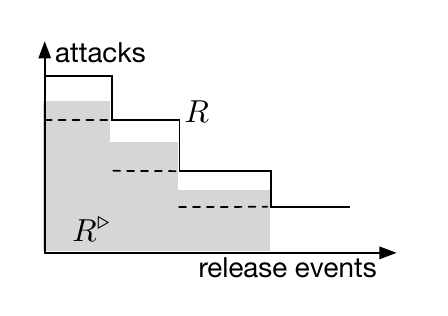} } 
\subfloat[Robustness]{\label{fig:robustness}\includegraphics[trim = 0 10 0 10, clip=true]{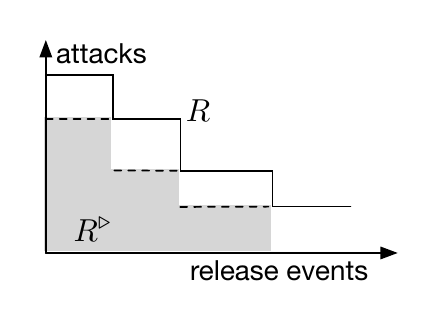}}
\caption{Release control and robustness}
\end{figure}

Figure~\ref{fig:releasecontrol} depicts the relationship between release control and attacker control,
where the $x$-axis corresponds to low events, and the $y$-axis corresponds to attacks. 
 The solid line depicts 
 attacker control $R$, where vertical lines correspond to release events. 
The gray area denotes release control $R^\triangleright$. 
In general, 
for a given attack $\vec a$ and a corresponding trace $\vec t \tconcat \vec r$, where $\vec r$ contains a release
event, we have the following relation between release control and attacker control:
\begin{equation}
 R(c[\vbullet], m, \vec a ,\vec t) \supseteq R^\triangleright (c[\vbullet], m, \vec a, \vec t)
\supseteq R (c[\vbullet], m, \vec a, \vec t \tconcat \vec r ) 
\label{eq:control:relationship}
\end{equation}
Note the white gaps and the gray release control above the dotted
lines on Figure~\ref{fig:releasecontrol}.  The white gaps correspond to difference
$ R(c[\vbullet], m, \vec a ,\vec t) \setminus R^\triangleright (c[\vbullet], m, \vec a, \vec t)$.
This is a set of attacks that do not produce
further release events and that diverge without giving any new information to the attacker---pure availability attacks. The gray zones above the
dotted lines are more interesting.  Every such zone corresponds to the difference $R^\triangleright
(c[\vbullet], m, \vec a, \vec t) \setminus R (c[\vbullet], m, \vec a, \vec t \tconcat \vec r ) $. In
particular, when this set
is non-empty, the attacker can launch attacks corresponding to each of the last three lines of
Definition~\ref{def:control:release:sensitive}:
\begin{enumerate}[(1)]
\item either trigger a different release event  $\vec {r'}$, or 
\item cause program to diverge in a way that also releases information, or 
\item prevent a release event from happening in a way that leads to program 
  termination
\end{enumerate}
Absence of such attacks constitutes the basis for our security conditions in Definitions~\ref{def:robustness:sensitive} and~\ref{def:robustness:insensitive}.  
Before moving on to these definitions, we introduce the progress-insensitive variant of release control.
\draftnote{This definition allows ignoring pure availability attacks such as ($\while{true}{\Skip}$).}

\begin{defi}[Release control (progress-insensitive)]
\begin{multline*}
R^\triangleright_\to
 (c[\vbullet], m, \vec a, \vec t) \defn
\{
\vec b\ |\ \exists \vec q~.~(\vec a, \vec t) \sim^{c[\vbullet], m}_\to (\vec b, \vec q) 
~\land \\
(
\exists \vec{r'}~.~k_\to(c[\vec b], m_\Public, \vec q_\Public)  \supset
 k(c[\vec b], m_\Public, \vec q_\Public \tconcat \vec{r'}_\Public)
\lor\  \configTwo{c[\vec b]}{m} \Downarrow 
)
\}
\end{multline*}\smallskip
\end{defi}

\noindent This definition uses the progress-insensitive variants of similar attacks and release events. It
also does not account for knowledge obtained from divergence.

With the definition of release control at hand we can now define        
semantic conditions for robustness. The intuition is that all attacks   
leading to release events should lead to the same release event.        
Formally, this is defined as inclusion of release control into attacker 
control, where release control is computed on the prefix of the trace   
without a release event.

\begin{defi}[Progress-sensitive robustness]
\label{def:robustness:sensitive}
Program $c[\vbullet]$ satisfies \emph{progress-sen\-sitive robustness} if for all memories $m$, attacks $\vec a$, and traces $\vec t \vec r$,  
such that
$
\configTwo{c[\vec a]}{m} \labarrowstar{}{\vec t} \configTwo{c'}{m'} \labarrowstar{}{\vec r}
$ 
and $\vec r$ contains a release event, i.e.,
$k(c[\vec a], m_\Public, \vec t_\Public) \supset k(c[\vec a], m_\Public, \vec t_\Public \tconcat \vec r_\Public)$,
we have \[R^\triangleright (c[\vbullet], m, \vec a, \vec t) \subseteq
R(c[\vbullet], m, \vec a, \vec t \tconcat \vec r)\]\vspace{-3 pt}
\end{defi}

\noindent Note that because of Equation~\ref{eq:control:relationship}, set inclusion in the above definition could be replaced with strict equality, but we
use $\subseteq$ for compatibility with future definitions. 
Figure~\ref{fig:robustness} illustrates the relation between release control and attacker control
for robust programs. Note how release control is bounded by the attacker control at the next
release event.

\paragraph{Examples} We illustrate the definition of robustness with a few examples.

Consider program $[\bullet]; \mathit{low} := u < h$, and memory such that $m(h) = 7$. 
This program is rejected by Definition~\ref{def:robustness:sensitive}. To see this,  pick an
$a= u:= 5$, and consider the part of the trace preceding the low assignment.  Release
control $R^\triangleright(c[\vbullet], m, a, [(u,5)]) $ is all attacks that reach the assignment to
$\mathit{low}$. On the other hand, the attacker control $R(c[\vbullet],m, a, [(u,5)] \tconcat (\mathit{low},
1))$ is the set of all attacks where $u<7$, which is smaller than $R^\triangleright$. Therefore this
program does not satisfy the condition. 

Program $[\bullet]; low := h; \mathit{low'} := u < h$ satisfies robustness. The only release event
here corresponds to the first assignment. However, because the knowledge given by that assignment
does not depend on untrusted variables, the release control includes all attacks that reach the
assignment. 

Program $[\bullet]; \ifThenelse{u>0}{\mathit{low} := h}{\Skip}$ is rejected. Consider memory
$m(h)=7$, and attack $a = u := 1$ that leads to low trace $[(u,1)] \tconcat (\mathit{low}, 7)$. The attacker
control for this attack and trace is the set of all attacks such that $u > 0$. On the other hand,
release control $ R^\triangleright (c[\vbullet], m, \vec a, [(u,1)])$ is the set of all attacks that lead to
termination, which includes attacks such that $u \leq 0$. Therefore, the release control 
corresponds to a bigger set than the attacker control.

\draftnote{
Note that program
$[\bullet]; \\ \ifThenelse{u>0}{\mathit{low}:=h}{\Skip}; \\\while{\cod{true}}{\Skip}$. is accepted by this definition. It is however, semantically equivalent to program 
$  [\bullet]; \\\while{u>0}{\Skip}; \\ \mathit{low}:=h;\\ \while{\cod{true}}{\Skip}$
}

Program $[\bullet]; \while{u>0}{\Skip}; \mathit{low}:=h$ is accepted. Depending on the attacker-controlled variable the release event is reached. However, this is an example of availability
attack, which is ignored by Definition~\ref{def:robustness:sensitive}.

Program $[\bullet]; \while{u>h}{\Skip}; \mathit{low}:=1$ is rejected.
Any attack leading to the low
assignment restricts the control to attacks such that $u \leq h$.  However, release
control includes attacks $u > h$, because the attacker learns
information from divergence.

The definition of progress-insensitive robustness is similar to
Definition~\ref{def:robustness:sensitive}, but uses progress-insensitive variants of release events,
control, and release control. As a result,
program $[\bullet]; \while{u>h}{\Skip}; \mathit{low}:=1$ is accepted: attacker control
is all attacks.

\begin{defi}[Progress-insensitive robustness]
\label{def:robustness:insensitive}
Program $c[\vbullet]$ satisfies \emph{progress-insen\-sitive robustness} if for all memories $m$, attacks $\vec a$, 
and traces $\vec t \vec r$, such that
$
\configTwo{c[\vec a]}{m} \labarrowstar{}{\vec t} \configTwo{c'}{m'} \labarrowstar{}{\vec r}
$ 
and $\vec r$ contains a release event, i.e.,
$k_\to(c[\vec a], m_\Public, \vec t_\Public) \supset k(c[\vec a], m_\Public, \vec t_\Public \tconcat \vec r_\Public)$,
we have \[R^\triangleright_\to(c[\vbullet], m, \vec a, \vec t) \subseteq R_\to(c[\vbullet], m, \vec a, \vec t \tconcat \vec r)\]
\end{defi}

\draftnote{Do we need  more examples here?}

\section{Endorsement}
\label{sec:endorsement}

This section extends the semantic policies for robustness in a way that allows \emph{endorsing} 
attacker-provided values. 

\begin{revisit}
\end{revisit}

\draftnote{Endorsement: more introduction}

\paragraph{Syntax and semantics}
We add endorsement to the language:
\begin{align*}
c[\vbullet]::=\; & \dots \ |\ x:=\EndorseLab{\eta}{e}
\end{align*}
We assume that every endorsement in the program source has  a unique 
\emph{endorsement label} $\eta$.
Semantically, endorsements produce 
\emph{endorsement events}, denoted by$\mathit{endorse} (\eta, v)$, which record the label of the endorsement statement $\eta$ 
together with the value $v$ that is endorsed. 
\begin{mathpar}
\inferrule{
  \configTwo{e}{m} \downarrow {v}
}{\configTwo{x:=\EndorseLab{\eta}{e}}{m} \labarrow{}{\mathit{endorse}(\eta, v)} \configTwo{\halt}{m[x \mapsto v]}}
\end{mathpar}
Whenever the endorsement label is unimportant, we omit it from the examples. Note that $\mathit{endorse}(\eta, v)$ events need not mention variable name $x$ since that information
is implied by the unique label $\eta$.

Consider example program $[\bullet] ;\mathit{low} := \EndorseLab{\eta_1}{u < h}$. This
program does not satisfy Definition~\ref{def:robustness:sensitive}. The reasoning for this is
exactly the same as for program
$[\bullet] ;\mathit{low} := {u < h}$ from Section~\ref{sec:robust}. 

\paragraph{Irrelevant attacks} 
Endorsement of certain values gives attacker some control over the knowledge. The key technical element of this section is the notion of _irrelevant attacks_, which defines the set of attacks that are endorsed, and that are therefore _excluded_ when comparing attacker control with release control. 
We define irrelevant attacks formally below, based on the trace that is produced by a program.

Given a program $c[\bullet]$, starting memory $m$, and a trace $\vec
t$, irrelevant attacks, denoted here by 
$\Corrupt{c[\vbullet], m, \vec t}$, are the attacks that lead to the same sequence of endorsement events as in $\vec t$, until they necessarily disagree on one of the endorsements.
  Because the influence of these attacks
is reflected at endorsement events, we exclude them from consideration when comparing with
attacker control. 

We start by defining \emph{irrelevant traces}. Given a trace $\vec t$,
 irrelevant traces for $\vec t$ are all traces $\vec{t'}$ that agree with $\vec t$
on some prefix of endorsement events until they necessarily disagree
on some endorsement.  We define this set as follows.

\draftnote{bind $i$}

\begin{defi}[Irrelevant traces]
\label{def:irrelevant:traces}
  Given a trace $\vec t$, where endorsements are marked as
  $\mathit{endorse}(\eta_j, v_j)$,  define a set of irrelevant traces
  based on the number of  endorsements in $\vec t$~as~$\idxcorrupt{i}{\vec t}$:
 $\idxcorrupt{0}{\vec t} = \emptyset$, and
\begin{align*}
 \idxcorrupt{i}{\vec t} = \{ & \vec{t'}\ |\ \vec{t'} = \vec{q} \tconcat 
 \mathit{endorse}(\eta_{i}, v'_i)  \tconcat \vec{q'}   \} \mbox{\ such that  } \\
& \mbox{$\vec{q}$ is a prefix of $\vec{t'}$ with $i-1$  events all of  which agree with $\mathit{endorse}$ events in $\vec t$, and} \\
& \mbox{$v_i \neq v'_i$ } 
\end{align*}
Define \(
\corrupt{\vec t} \defn \bigcup_i  \idxcorrupt{i}{\vec t}  \) as a set of \emph{irrelevant traces} w.r.t. $\vec t$.
\end{defi}

With the definition of irrelevant traces at hand, we can define irrelevant attacks:
irrelevant attacks are  attacks that  lead to irrelevant traces. 
\begin{defi}[Irrelevant attacks]
\label{def:irrelevant:attacks}
Given a program $c[\vbullet]$, initial memory $m$, and a trace $\vec t$, such that 
$\configTwo{c[\vbullet]}{m} \labarrowstar{}{\vec t}$, define \emph{irrelevant attacks} $\Corrupt{c[\vbullet], m, \vec t} $ as
\[
\Corrupt{c[\vbullet], m, \vec t} 
\defn
\{ \vec{a}\ |\ 
\configTwo{c[\vec a]}{m} \labarrowstar{}{\vec t'} \land
\vec{t'} \in \corrupt{\vec t}
\}
\]
\end{defi}

\paragraph{Security}
The security conditions for robustness can now be adjusted to accommodate endorsements that happen
along traces. The idea is to exclude irrelevant attacks from the left-hand side of
Definitions~\ref{def:robustness:sensitive}
and~\ref{def:robustness:insensitive}. This security condition, which has
both progress-sensitive and progress-insensitive versions, expresses
roughly the same idea as _qualified robustness_~\cite{Myers:Sabelfeld:Zdancewic:JCS06},
but in a more natural and direct way.

\begin{defi}[Progress-sensitive robustness with endorsements]
\label{def:robustness:qualified:sensitive}
Program $c[\vbullet]$ satisfies \emph{progress-sensitive robustness with endorsement}
if for all memories $m$, attacks $\vec a$, and traces $\vec t \vec r$,
such that
$
\configTwo{c[\vec a]}{m} \labarrowstar{}{\vec t} \configTwo{c'}{m'} \labarrowstar{}{\vec r}
$ 
and $\vec r$ contains a release event, i.e., 
$k(c[\vec a], m_\Public, \vec t_\Public) \supset k(c[\vec a], m_\Public, \vec t_\Public \tconcat \vec r_\Public)$,
we have \[R^\triangleright (c[\vbullet], m, \vec a, \vec t) \setminus \Corrupt{c[\vbullet], m, \vec t \tconcat \vec r} \subseteq  R(c[\vbullet], m, \vec a, \vec t \tconcat \vec r)\]

\end{defi}
\begin{figure}
\noindent
\centering
\subfloat[Irrelevant attacks]{\parbox{5cm}{\centering
\label{fig:endorsements:irrelevant}\includegraphics[scale=1.2]{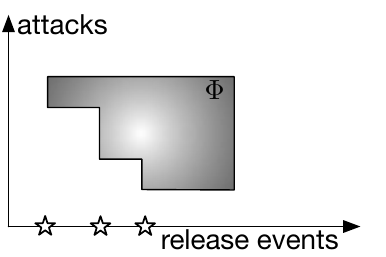}}}
\subfloat[Robustness w/o endorsements (unsatisfied)]{\parbox{5.6cm}{\centering
\label{fig:endorsements:nonrobust}\includegraphics[scale=1.2]{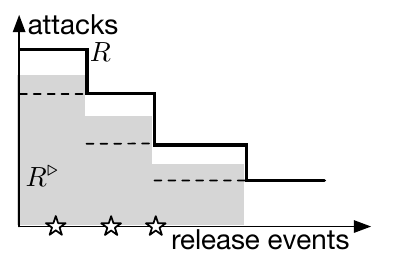}}}
\subfloat[Robustness with endorsements
(satisfied)]{\parbox{5cm}{\centering
\label{fig:endorsements:robust}\includegraphics[scale=1.2]{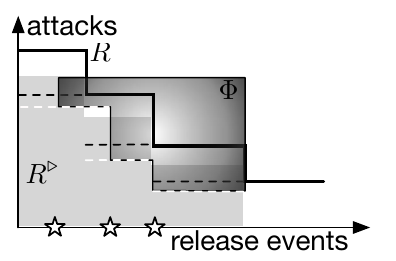}}}

\caption{Irrelevant attacks and robustness with endorsements}
\end{figure}\smallskip

\noindent We refer to the set $R^\triangleright (c[\vbullet], m, \vec a, \vec t)
\setminus \Corrupt{c[\vbullet], m, \vec t \tconcat \vec r} $ as a set of
\emph{relevant attacks}. 
 Figures~\ref{fig:endorsements:irrelevant} to~\ref{fig:endorsements:robust} visualize irrelevant attacks
and the semantic condition
of Definition~\ref{def:robustness:qualified:sensitive}.
Figure~\ref{fig:endorsements:irrelevant} shows the set of irrelevant
attacks, depicted by the shaded gray area. This set increases at endorsement events marked by stars.
Figure~\ref{fig:endorsements:nonrobust} shows an example trace where robustness is not satisfied --- 
the gray area corresponding to release control $R^\triangleright$ exceeds the attacker control
(depicted by the solid line).  Finally, in
Figure~\ref{fig:endorsements:robust}, we superimpose
Figures~\ref{fig:endorsements:irrelevant}
and~\ref{fig:endorsements:nonrobust}. This illustrates that when
the set of irrelevant attacks is excluded from the release control (the area under white dashed
lines), the program is accepted by robustness with endorsements.
\paragraph{Examples}

Program $[\bullet] ;\mathit{low} := \EndorseLab{\eta_1}{u < h}$ is accepted by Definition~\ref{def:robustness:qualified:sensitive}. Consider initial memory $m(h) = 7$, and an
attack $u:=1$; this produces a trace $[(u,1)] \mathit{endorse}(\eta_1,1)$. The endorsed assignment also produces a release event. We have that 
\begin{iteMize}{$\bullet$}
\item Release control $R^\triangleright$ is the 
set of all attacks that reach the low assignment. 

\item 
Irrelevant traces $\corrupt{[(u,1)] \mathit{endorse}(\eta_1,1)}$ is 
a set of traces that end in endorsement event $\mathit{endorse}{(\eta_1, v)}$ such that $v \neq 1$. Thus, 
irrelevant attacks $\Corrupt{[\bullet]; \mathit{low} := \EndorseLab{\eta_1}{u < h}, m, [(u,1)] \mathit{endorse}(\eta_1,1)}$
must consist of attacks that reach the low assignment and set $u$ to values  $u \geq 7$.

\item The left-hand side of Definition~\ref{def:robustness:qualified:sensitive} is therefore the set of attacks that reach the endorsement and set $u$ to $ u < 7$. 

\item As for the attacker control on the right-hand side, it consists of attacks that
  set $u < 7$. Hence, the set inclusion of Definition~\ref{def:robustness:qualified:sensitive} holds and the
  program is accepted.
\end{iteMize}\smallskip

\noindent Program $[\bullet]; \mathit{low} := \EndorseLab{\eta_1}{u}; \mathit{low'}:= {u < h''}$ is accepted. The 
endorsement in the first assignment implies that all relevant attacks must agree on the value of $u$, and, consequently, they agree  on the value of $u < h''$, which gets assigned to $\mathit{low'}$. This also
 means that relevant attacks belong to the attacker control (which contains all attacks 
that  agree on $u < h''$). 

Program $[\bullet]; \mathit{low} := \EndorseLab{\eta_1}{u < h}; \mathit{low'}:= {u < h''}$ is
rejected. Take initial memory such that $m(h) \neq m(h')$. The set of relevant attacks after the second assignment contains
attacks that agree on $u < h$ (due to the endorsement), but not necessarily on $u < h''$. The latter, however, is the requirement 
for the attacks that belong to the attacker control.

Program $[\bullet]; \ifThenelse{u>0}{\mathit{h'}:=\Endorse{u}}{\Skip}; \mathit{low} := h' < h$ is
rejected.  Assume initial memory where $m(h) = m(h') = 7$. Consider
attack $a_1$ that sets $u:= 1$ and consider the trace $\vec t_1$ that it gives. This trace endorses $u$ in the $\cod{then}$ branch, overwrites the
value of $h'$ with $1$, and produces a release event $(\mathit{low},
1)$.  Consider another attack $a_2$ that sets
$u:=0$, and consider the corresponding trace $\vec t_2$. This trace contains release event $(\mathit{low}, 0)$ without any
endorsements. Now, attacker control $R(c[\vbullet], m, a_2, \vec t_2)$ excludes $a_1$, because of
the disagreement at the release event.  At the same time, $a_1$ is a relevant attack for $a_2$,
because no endorsements happen along~$\vec t_2$. 

Consider program $c[\vbullet]$, which contains no endorsements. In this case, for all possible traces
$\vec t$, we have that $\corrupt{\vec t} = \idxcorrupt{0}{\vec t} = \emptyset$. Therefore, by
Definition~\ref{def:irrelevant:attacks} it must be that $\Corrupt{c[\vbullet], m, \vec t} =
\emptyset$ for all memories $m$ and traces $\vec t$.  This indicates that for programs without
endorsements, progress-sensitive robustness with endorsements
(Definition~\ref{def:robustness:qualified:sensitive}) conservatively reduces to the earlier
definition of progress-sensitive robustness (Definition~\ref{def:robustness:sensitive}).

Progress-insensitive robustness with endorsement is defined similarly. The intuition for the definition 
remains the same, while  we use progress-insensitive variants of progress control and control:

\begin{defi}[Progress-insensitive robustness with endorsement]
\label{def:robustness:qualified:insensitive}
Program $c[\vbullet]$ satisfies \emph{progress-insensitive robustness
with endorsement} if for all memories $m$, attacks $\vec a$, and traces $\vec t \tconcat \vec r$, 
such that
$
\configTwo{c[\vec a]}{m} \labarrowstar{}{\vec t} \configTwo{c'}{m'} \labarrowstar{}{\vec r}, 
$ 
and $\vec r$ contains a release event, i.e., 
$k_\to(c[\vec a], m_\Public, \vec t_\Public) \supset k(c[\vec a], m_\Public, \vec t_\Public \tconcat \vec r_\Public)$,
we have \[R^\triangleright_\to(c[\vbullet], m, \vec a, \vec t) \setminus \Corrupt{c[\vbullet], m, \vec t \tconcat \vec r} \subseteq R_\to(c[\vbullet], m, \vec a, \vec t \tconcat \vec r)\]

\end{defi}

As a final note in this section, observe that because of the particular use of irrelevant attacks in Definitions~\ref{def:robustness:qualified:sensitive} and~\ref{def:robustness:qualified:insensitive} it is sufficient for us to define irrelevant traces so that they only match at the endorsement events. A slightly more generalized notion of irrelevance
would require $\vec q$ in Definition~\ref{def:irrelevant:traces} to be similar to a prefix of $\vec{t'}$. 

\section{Enforcement}
\label{sec:enforcement}

We now explore how to enforce
robustness using a security type
system. While this section focuses on progress-insensitive enforcement, it is possible to refine the type system to deal with
progress sensitivity (modulo availability attacks)~\cite{Volpano:Smith:Probabilistic:CSFW,ONeil+:CSFW06}.  %
Figures~\ref{fig:types:expressions} and~\ref{fig:types:commands} display typing rules for
expressions and commands. This type system is based on the one of~\cite{Myers:Sabelfeld:Zdancewic:JCS06}
and is similar to many standard security type systems.

\paragraph{Declassification} We extend the language with a language construct for
\emph{declassification} of expressions $\decl e$. Whereas in earlier
examples, we considered an assignment $l := h$ to be secure if it did
not violate robustness, we now require information flows from
public to secret to be mediated by declassification.
We note that declassification has no
additional semantics and, in the context of our simple language, can be inferred automatically.
This may be achieved by placing declassifications in  public assignments that appear in  
trusted code, i.e., in  non-$\bullet$ parts of the program.
Moreover, making declassification explicit has the following motivations:

\begin{enumerate}[(1)]
\item On the enforcement level, the type system conveniently ensures that a non-progress release event may happen only at declassification. 
  All other assignments preserve progress-insensitive knowledge.
\item Much of the related work on language-based declassification policies uses similar type systems.
  Showing our security policies can be enforced using such systems
  makes the results more general.
\end{enumerate}

\paragraph{Typing of expressions}
Type rules for expressions have form $\G \vdash e : \level, D$ 
where $\level$ is the level of the expression, and $D$ is a set of variables that may be declassified. The declassification is the most interesting rule among expressions. 
It  downgrades the confidentiality level of the expression by returning $\ell \sqcap (\Public, \Untrusted)$, and counts all variables in $e$ as declassified.

\vspace{-2em}
\begin{figure}
\begin{mathpar}
\inferrule{}{\G \vdash n : \level, \emptyset } 
\and
\inferrule{}{\G \vdash x : \G(x), \emptyset }
\and 
\inferrule{
\G \vdash e_1 : \level_1, D_1 \and
\G \vdash e_2 : \level_2, D_2 
}{\G \vdash e_1\ \mathit{op}\ e_2 : \level_1 \join \level_2 , D_1 \cup D_2} 
\and
\inferrule[(T-DECL)]{
\G \vdash e : \level, D 
}{
\G \vdash \decl{e} :  \ell \sqcap (\Low, \Untrusted) , \mathit{vars} (e) 
}
\and
\end{mathpar}
\caption{Type system: expressions}
\label{fig:types:expressions}
\begin{mathpar}
\inferrule[(T-SKIP)]{}{\G, \pc \vdash \Skip } 
\and
\inferrule[(T-SEQ)]{\G, \pc \vdash c_1 \and \G, \pc \vdash c_2}{\G, \pc \vdash c_1; c_2}
\and
\inferrule[(T-ASGMT)]{\G  \vdash e : \level, D
\and
\level \join \pc \sqsubseteq \G(x)
\and
\forall y \in D~.~ \G(y) \sqsubseteq (\High, \Trusted)
\and
D \neq \emptyset \implies \pc \sqsubseteq (\Low, \Trusted )
  }{\G, \pc \vdash x := e}
\and
\inferrule[(T-IF)]{
\G \vdash e: \level, \emptyset
\and
\G, \pc \join \level \vdash c_1 
\and
\G, \pc \join \level \vdash c_2
}{
\G, \pc  \vdash \ifThenelse{e}{c_1}{c_2}
}
\and
\inferrule[(T-WHILE)]{
\G \vdash e: \level, \emptyset 
\and
\G, \pc \join \level \vdash c
}{
\G, \pc \vdash \while{e}{c}
}
\and
\inferrule[(T-HOLE)]{
\pc \sqsubseteq (\Low, \Untrusted) 
}{
\G, \pc \vdash \bullet
}
\and
\inferrule[(T-ENDORSE)]{
  \pc \join \G(x) \sqsubseteq (\Secret, \Trusted) \and
  \pc \sqsubseteq \G(x) \and
  \G \vdash e : \ell, \emptyset \and
  \ell \sqcap (\Secret, \Trusted) \sqsubseteq \G(x)
}{
\G, \pc \vdash x := \Endorse{e}
}
\end{mathpar}
\caption{Type system: commands}
\label{fig:types:commands}
\end{figure}

\paragraph{Typing of commands}
The typing judgments for commands have the 
form $\Gamma, \pc \vdash c$. The rules are standard for a security type system. We highlight
 typing of assignments, endorsement, and holes.

Assignments have two extra clauses for when the assigned expression contains a
declassification $(D \neq \emptyset)$. 
The rule (T-ASGMT) requires all variables that
can be declassified have high integrity. The rule also bounds the $\pc$-label by $ (\Public,
\Trusted)$, which enforces that no declassification happens in untrusted or secret
contexts. These requirements guarantee that the information released by the declassification does not
directly depend on the attacker-controlled variables. 

The typing rule for endorsement (T-ENDORSE) requires that the $\pc$-label is trusted and that the result of the endorsement is stored
in a trusted variable:  $\pc \join \G(x) \sqsubseteq (\Secret,
\Trusted)$.  Note that requiring a trusted $\pc$-label  is crucial,
while the restriction that $x$ is trusted could easily be lifted, since trusted values may flow into untrusted variables.
Because endorsed expressions preserve their
confidentiality level, we also check that $x$ has the right security level to store the result of the expression. This
is done by demanding that $\level \sqcap (\Secret, \Trusted) \sqsubseteq \G(x)$, where taking meet of $\ell$ and $(\Secret, \Trusted)$ 
boosts integrity, but keeps the confidentiality level of $\level$.

The rule for holes forbids placing attacker-provided code in high confidentiality contexts. 
For simplicity, we disallow declassification in the guards of $\cod{if}$ 
and $\cod{while}$.

\subsection{Soundness} This section shows that the type system of Figures~\ref{fig:types:expressions}
and~\ref{fig:types:commands} is sound. We formulate top-level soundness in Proposition~\ref{prop:control:soundness}. 
The proof of Proposition~\ref{prop:control:soundness} appears in the end of the section.

 \begin{prop}
\label{prop:control:soundness}
 If $\Gamma, \pc \vdash c[\vbullet]$ then for all attacks $\vec a$, memories $m$, and traces $\vec t \tconcat \vec r$ produced 
 by $\configTwo{c[\vec a]}{m}$, where $k_\to(c[\vec a], m_\Public, \vec t_\Public) \supset k(c[\vec a], m_\Public, \vec t_\Public \tconcat \vec r_\Public)$,
we have that
\[R^\triangleright_\to(c[\vbullet], m, \vec a, \vec t) \setminus \Corrupt{c[\vbullet], m, \vec t \tconcat \vec r} \subseteq R_\to(c[\vbullet], m, \vec a, \vec t \tconcat \vec r)\]
 \end{prop}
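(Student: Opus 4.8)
The plan is to prove Proposition~\ref{prop:control:soundness} by establishing a collection of syntactic-typing invariants about well-typed programs $c[\vbullet]$ and then reading off the semantic inclusion from them. The central observation is that the type system confines all progress-insensitive release events to declassification expressions, which by rule (T-ASGMT) occur only in trusted, public $\pc$-context and involve only high-integrity variables; symmetrically, (T-ENDORSE) is the only way an untrusted value can flow into a trusted variable, and it is recorded by an endorsement event. So I would first prove a \emph{subject-reduction / confinement lemma}: if $\Gamma,\pc \vdash c[\vbullet]$ and $\configTwo{c[\vec a]}{m}\labarrowstar{}{\vec t}\configTwo{c'}{m'}$, then (i) every progress-insensitive release event in $\vec t$ is produced by a declassification assignment executed in a high-$\pc$ configuration, and (ii) the value declassified there is a function only of the high-integrity (trusted) part of the current memory, hence ultimately of $m_\Trusted$ together with whatever was endorsed along the way. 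This is the standard "no low-integrity data reaches a declassification guard" argument, carried out over traces with the extra bookkeeping for holes (T-HOLE forces $\pc\sqsubseteq(\Low,\Untrusted)$ inside holes, so attacker code never does a declassification and never writes a trusted variable except via an intervening endorsement).

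Next I would unwind the definitions on the goal. Fix $m$, $\vec a$, and $\vec t\tconcat\vec r$ with $\vec r$ containing a progress-insensitive release event, and take an arbitrary $\vec b \in R^\triangleright_\to(c[\vbullet],m,\vec a,\vec t)\setminus\Corrupt{c[\vbullet],m,\vec t\tconcat\vec r}$, with witnessing trace $\vec q$ so that $(\vec a,\vec t)\sim^{c[\vbullet],m}_\to(\vec b,\vec q)$ and $\vec b$ either eventually produces a release event or terminates. I must produce a trace $\vec q'$ witnessing $(\vec a,\vec t\tconcat\vec r)\sim^{c[\vbullet],m}_\to(\vec b,\vec q')$. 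Since $\vec b\notin\Corrupt{\cdot}$, the run of $c[\vec b]$ agrees with $\vec t\tconcat\vec r$ on the full sequence of endorsement events occurring up through $\vec r$ (they never "necessarily disagree" on an endorsement). By the confinement lemma, both the decision to produce the release event in $\vec r$ and the value it carries are determined by $m_\Trusted$ and that common endorsement sequence; hence $c[\vec b]$ must reach a matching release event $\vec r'$ with the same public low event. (The $R^\triangleright_\to$ membership is what guarantees $c[\vec b]$ does not simply stall forever before getting there in a way that would leave the segmentation mismatched: it either reaches such a release event or terminates, and termination is incompatible with having a strictly later release event only if... — this edge case needs the release-control clause precisely.) Extending the segmentations of $\vec q$ by this new segmentation event and appending the rest of the trace gives $\vec q'$, so $\vec b\in R_\to(c[\vbullet],m,\vec a,\vec t\tconcat\vec r)$.

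Concretely I expect the proof to proceed in these steps: (1) prove the expression-level lemma that $\Gamma\vdash e:\ell,D$ with $D=\emptyset$ implies $e$'s value depends only on the $\sqsubseteq\ell$-part of memory, and the declassification rule's accounting is sound; (2) prove the command-level confinement/subject-reduction lemma over traces, tracking $\pc$ and proving the two italicized facts above, including that attacker (hole) code can change only untrusted, and can produce low events only in low-integrity $\pc$; (3) prove a "release events come from declassifications in trusted context" corollary and a "value released is a function of $m_\Trusted$ and the endorsement history" corollary; (4) do the definitional unwinding and the trace-matching argument sketched above, being careful about the three clauses of release control; (5) assemble. The main obstacle is step (4): matching up the two knowledge segmentations so that the \emph{segmentation events} of the extended trace for $\vec b$ line up in number and in public value with those for $\vec a$ — in particular showing that excluding exactly the irrelevant attacks $\Corrupt{c[\vbullet],m,\vec t\tconcat\vec r}$ is both necessary and sufficient to force agreement at the new release event, and that pure-availability/progress behaviour of $\vec b$ (allowed because $\vec b\in R^\triangleright_\to$) does not break the segmentation correspondence. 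The confinement lemmas are routine but tedious; the segmentation-alignment bookkeeping is where the real content lies.
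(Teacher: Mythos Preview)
Your plan is essentially the paper's, repackaged in a more denotational idiom. The paper proves the key step via an explicit two-run bisimulation: an \emph{Advancement Lemma} shows that if two attacks have reached configurations $\configTwo{c'[\vec a']}{m'}$ and $\configTwo{c'[\vec b']}{s'}$ with the same residual command and $m' =_\Trusted s'$, and both next produce a declassification event (after a PINI-preserving prefix), then they produce the \emph{same} declassification event and the resulting memories stay trusted-equal. A \emph{Control Backbone Lemma} iterates this across all declassification events in $\vec t$, and the final proof is exactly the case split on the two disjuncts of $R^\triangleright_\to$ that you flag in step~(4). Your ``value released is a function of $m_\Trusted$ and the endorsement history'' is the abstract content of that bisimulation; when you unpack it you will end up doing the same structural induction on the residual command that the paper carries out in its sequential-composition lemmas.

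One point to tighten. You write that ``since $\vec b\notin\Corrupt{c[\vbullet],m,\vec t\tconcat\vec r}$, the run of $c[\vec b]$ agrees with $\vec t\tconcat\vec r$ on the full sequence of endorsement events.'' Non-membership in $\Corrupt{\cdot}$ only says that \emph{when} $c[\vec b]$ produces its $i$-th endorsement event, the value matches the $i$-th endorsement in $\vec t\tconcat\vec r$; it does not by itself guarantee that $c[\vec b]$ reaches the same endorsement statements in the same order (it could diverge earlier, or take a different untrusted branch). That reachability agreement comes from your confinement lemma---endorsements sit in trusted $\pc$, so the control flow leading to them is determined by trusted state, which by induction is synchronized---not from the $\Corrupt$ condition. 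The paper keeps the two ingredients separate: the bisimulation delivers ``same program point reached, same trusted memory,'' and $\vec b\notin\Corrupt{\cdot}$ is invoked only inside the endorsement case of the structural induction to force the endorsed values to coincide and thereby preserve $m' =_\Trusted s'$. With that clarification your outline goes through.
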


\paragraph{Auxiliary definitions}
We introduce an auxiliary definition of progress-insensitive noninterference along a part of a
trace, abbreviated PINI, which we will use in the proof of
Proposition~\ref{prop:control:soundness}. 
 Figure~\ref{fig:proof:soundness} shows the high-level
structure of the proof. 
We define \emph{declassification
  events} to be low events that involve
declassifications.
The central property of this proof --- the control backbone lemma
(Lemma~\ref{lemma:control:backbone}) --- captures the behavior of similar attacks and traces that
are generated by well-typed commands. Together with the Advancement
Lemma, it shows that declassification events soundly approximate release events.
 The proof of Proposition~\ref{prop:control:soundness} follows
directly from the Control Backbone and Advancement lemmas.

 \begin{defi}[Progress-insensitive noninterference along a part of a trace]
   Given a program $c$, memory $m$, and two traces $\vec t$ and $\vec t^+$ such that $\vec t^+ $ is
   an extension of $\vec t$, we say that $c$ satisfies \emph{progress-insensitive noninterference}
   along the part of the trace from $\vec t$ to $\vec t^+$, denoted by $\mathrm{PINI}(c, m, \vec t,
   \vec t^+)$ whenever for the low events in the corresponding traces
  $\vec \ell_n \defn \vec t_\Public$ and $ \vec \ell_N \defn  \vec t^+_\Public, 
   n  \leq N$, it holds that
\[
\forall i~.~ n < i < N~.~ k_\to(c,m_\Public, \vec \ell_i) \subseteq k (c, m_\Public, \vec \ell_{i+1})
\]

 \end{defi}

\begin{figure}

\centering
\includegraphics{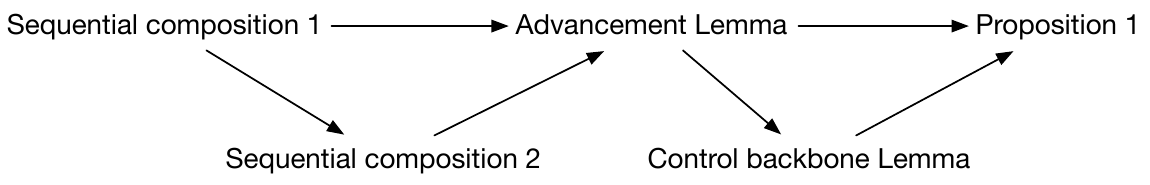}

\caption{High-level structure of proof of Proposition~\ref{prop:control:soundness}}
\label{fig:proof:soundness} 
\end{figure}

 \begin{lem}[Noninterference for no declassifications] 
 \label{lemma:nodeclass:ni}
 Given a program $c$ without declassifications such that $\Gamma, \pc \vdash c$ then for all
 memories $m$ and possible low events $\vec \ell \tconcat \ell'$ such that 
\[ \configTwo{c}{m}
 \labarrowstar{}{\vec \ell}  \configTwo{c'}{m'} 
 \labarrowstar{}{\ell'}  \configTwo{c''}{m''} \]
it holds that $k_\to(c, m, \vec \ell) \subseteq k(c, m, \vec \ell \tconcat \ell')$.
 \end{lem}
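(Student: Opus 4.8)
The plan is to read the claimed inclusion $k_\to(c,m,\vec\ell)\subseteq k(c,m,\vec\ell\tconcat\ell')$ as progress-insensitive noninterference for the declassification-free fragment of the type system, and to prove it by the standard two-run argument. Unfolding the definitions, a memory $\hat m$ lies in $k_\to(c,m,\vec\ell)$ exactly when $\hat m=_\Public m$ and $\configTwo{c}{\hat m}$ produces the low trace $\vec\ell$ followed by at least one more low event $\ell''$; we must show $\ell''=\ell'$, which is precisely $\hat m\in k(c,m,\vec\ell\tconcat\ell')$. So it suffices to establish a $\Public$-bisimulation lemma: if $\G,\pc\vdash c$ with $c$ declassification-free and $m_1=_\Public m_2$, then the low-event traces of $\configTwo{c}{m_1}$ and $\configTwo{c}{m_2}$ are prefix-comparable (one is a prefix of the other) and agree on their common prefix; in particular, if both produce $\vec\ell$ and then one further low event, those events coincide. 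Instantiating this with the actual run from $m$ (whose low trace begins $\vec\ell\tconcat\ell'$) and any run witnessing $\hat m\in k_\to(c,m,\vec\ell)$ then finishes the argument.

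The bisimulation lemma rests on two routine ingredients. First, a \emph{confinement} lemma: if $\G,\pc\vdash c$ and the confidentiality component of $\pc$ is $\Secret$, then no run of $\configTwo{c}{m}$ emits a low event and every reachable memory is $=_\Public m$. This is an induction on the typing derivation: in (T-ASGMT) the side condition $\level\join\pc\sqsubseteq\G(x)$ forces $\G(x)$ secret, so $(x,v)$ is not a low event and the update does not touch $m_\Public$; in (T-IF) and (T-WHILE) the branch or body is typed at the still-secret $\pc\join\level$; in (T-ENDORSE) $\pc\sqsubseteq\G(x)$ makes $x$ secret, so the update is again publicly invisible and the endorsement event carries no public data; and (T-HOLE) cannot occur under a secret $\pc$, since it requires $\pc\sqsubseteq(\Public,\Untrusted)$. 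Second, a \emph{residual-structure} observation: a step from a well-typed command leaves a command that decomposes as $d_1;\dots;d_k$ with each $d_j$ well-typed at some $\pc_j$, where the $\pc_j$ are exactly the program counters of the conditionals and loops the step has descended into; those $d_j$ with secret $\pc_j$ are confined by the first lemma. This plays the role of a $\pc$-annotated operational semantics: it tells the induction that the next piece of code to run is either guarded by a public $\pc$ or is confined.

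With these in hand, prove the bisimulation by induction on the number of low events produced, maintaining the invariant that the two configurations have $=_\Public$ memories and the same arrangement of confined versus low-$\pc$ residual fragments. Run each side up to its next low event. If the emitting command is an assignment, it sits under a public $\pc$ and assigns an expression of public confidentiality (both forced by $\level\join\pc\sqsubseteq\G(x)$ with $\G(x)$ public, since the event is low), so the expression evaluates identically in the two memories and $=_\Public$ is preserved. If it is an endorsement, either its target is secret or, by (T-ENDORSE), the endorsed expression has public confidentiality; either way the update respects $=_\Public$ and no knowledge-changing low event is emitted. If control is about to enter a hole, the injected code runs under a public, untrusted $\pc$ and --- being well-typed, or in the intended applications fair --- contributes only low events that do not depend on secret memory (hence identical in the two runs), since it can neither branch on nor copy out secret data. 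And whenever one side branches on a secret guard, confinement lets \emph{both} sides run their (possibly different) branches out with no low events and resynchronise on the shared continuation. Hence the next low events match and the induction goes through; the cases for $\Skip$, sequencing, and public guards are immediate.

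I expect the residual-structure step to be the main obstacle: once we enter a branch of a high conditional, the command that remains is genuinely not well-typed at the ambient $\pc$, so a naive induction on typing derivations does not close. I would resolve this in one of the two standard ways --- either thread $\pc$-labels through the small-step semantics and re-prove the typing rules as judgments on configurations, or carry the explicit $d_1;\dots;d_k$ decomposition above --- and then choose an induction measure for which ``skip over a confined sub-command'' strictly decreases. Getting that measure right, and pinning down exactly what must be assumed of code placed in a hole (only that, under a public untrusted $\pc$, it emits no knowledge-changing low event), is the delicate part; the rest of the case analysis is routine.
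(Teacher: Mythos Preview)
Your proposal is correct and fills in precisely the argument the paper omits: the paper's entire proof of this lemma is the one line ``By induction on $c$ (cf.\ \cite{Askarov:Sabelfeld:SP07}),'' deferring all detail to the cited gradual-release paper. Your unfolding of $k_\to$ and $k$, the confinement lemma for secret $\pc$, and the two-run argument matching the next low event are exactly the standard ingredients that citation points to.

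Two minor remarks. First, the paper phrases the proof as structural induction on $c$, whereas you organise it as an induction on the number of low events with an invariant on residual configurations; these are the two usual packagings of the same argument, and your anticipated difficulty with residuals of high branches is real in either scheme (the structural version handles it by a nested induction on loop iterations together with the same confinement lemma). Second, your treatment of holes is more careful than the lemma strictly requires: as stated, the lemma is about a concrete, hole-free program $c$ that types, so the (T-HOLE) case does not arise here --- the interaction with attacker-injected code is handled separately via the fair-attack assumption in the later lemmas that invoke this one. Dropping that case simplifies your sketch without loss.
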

 \begin{proof}
 By induction on $c$ (cf. \cite{Askarov:Sabelfeld:SP07}). 
 \end{proof}

 \begin{lem}[Noninterference for the tail of sequential composition]
 \label{lemma:knowledge:tail}
 Assume a program $c$ such that for all memories $m$ and low events $\ell$, 
 such that 
$\configTwo{c}{m} \labarrowstar{}{\ell} \configTwo{c'}{m'}$, it holds that
 $k_\to(c,m_\Low, \epsilon) \subseteq k(c, m_\Low, \ell)$.  
Then for all
 programs $c_0$, initial memories $i$, and low events $\vec l_0$, such that 
 \[
 \configTwo{c_0; c}{i} \labarrowstar{}{\vec \ell_0} \configTwo{c}{i'} \labarrowstar{}{\ell'}
 \]
 we have $k_\to(c_0; c, i_\Low, \vec \ell_0) \subseteq k(c_0; c, i_\Low, \vec \ell_0 \tconcat \ell')$.
 \end{lem}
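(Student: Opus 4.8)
The plan is to prove the stated inclusion pointwise. Fix an arbitrary $m'\in k_\to(c_0;c,i_\Low,\vec\ell_0)$; I must show $m'\in k(c_0;c,i_\Low,\vec\ell_0\tconcat\ell')$. Unfolding progress knowledge, $m'_\Public=i_\Public$ and there is a run $\configTwo{c_0;c}{m'}\labarrowstar{}{\vec\ell_0}\configTwo{d}{n}\labarrowstar{}{\ell''}$ for some configuration $\configTwo{d}{n}$ and some single low event $\ell''$. Unfolding plain knowledge, it suffices to exhibit a finite run of $\configTwo{c_0;c}{m'}$ whose low events are exactly $\vec\ell_0\tconcat\ell'$, so it is enough to show that $\ell''$ may be taken equal to $\ell'$.

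First I peel off $c_0$. Using the semantics of sequential composition together with the hypothesis $\configTwo{c_0;c}{i}\labarrowstar{}{\vec\ell_0}\configTwo{c}{i'}$ --- which says $c_0$ run from $i$ halts precisely when the low trace $\vec\ell_0$ is complete --- I will argue that the run of $\configTwo{c_0;c}{m'}$ producing $\vec\ell_0$ also passes through a configuration of the form $\configTwo{c}{\hat n}$; i.e. $c_0$ run from $m'$ likewise halts exactly when $\vec\ell_0$ has been emitted. Since every assignment to a public variable is recorded as a low event and $m'_\Public=i_\Public$, the public store reached after $c_0$ is determined by $i_\Public$ and $\vec\ell_0$ alone, whence $\hat n_\Low=i'_\Low$. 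Because $c_0$ has halted, the ``one more low event'' $\ell''$ guaranteed by $m'\in k_\to$ is the first low event emitted by $c$ started in $\hat n$; hence $\hat n\in k_\to(c,i'_\Low,\epsilon)$.

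Now I invoke the hypothesis on $c$. Instantiating it at memory $i'$ and low event $\ell'$ (the first low event of $c$ from $i'$, which exists by assumption) gives $k_\to(c,i'_\Low,\epsilon)\subseteq k(c,i'_\Low,\ell')$. With $\hat n\in k_\to(c,i'_\Low,\epsilon)$ this yields $\hat n\in k(c,i'_\Low,\ell')$, i.e. the first low event of $c$ from $\hat n$ is $\ell'$, so $\ell''=\ell'$. Concatenating the two halves of the run, $\configTwo{c_0;c}{m'}$ produces the low trace $\vec\ell_0\tconcat\ell'$, and therefore $m'\in k(c_0;c,i_\Low,\vec\ell_0\tconcat\ell')$, which closes the argument.

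The main obstacle is the peeling step: showing that every $m'$ contributing to $k_\to(c_0;c,i_\Low,\vec\ell_0)$ makes the head $c_0$ terminate exactly when the low trace $\vec\ell_0$ is exhausted --- rather than emitting extra public events, emitting the additional low event while $c_0$ is still running, or stalling on a proper prefix of $\vec\ell_0$. This does not follow from the hypothesis on $c$ in isolation; it relies on $c_0$ being an already-executed, declassification-free prefix of a well-typed program, whose low behaviour is then progress-insensitively noninterfering by Lemma~\ref{lemma:nodeclass:ni} --- precisely the setting in which this lemma is applied inside the proof of the Control Backbone lemma. A minor additional point is that memories stalling on a proper prefix of $\vec\ell_0$ never lie in $k(c_0;c,i_\Low,\vec\ell_0)$, hence not in $k_\to$ at the same index, so they are harmless; and that $\vec\ell_0$ rigidly fixes the public store, which is what lets the single instance of the hypothesis at $i'$ transfer to $\hat n$.
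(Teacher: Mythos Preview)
Your argument follows the same line as the paper's: the paper proves the contrapositive, picking a witness $m$ with $m=_\Public i$ that lies in $k_\to$ but not in $k$, asserting that the $m$-run reaches $\configTwo{c}{m'}$ after exactly $\vec\ell_0$ with $m'=_\Public i'$, and then contradicting the hypothesis on $c$ at $i'$. Apart from direct-versus-contrapositive, the structure is the same.

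You are right to flag the peeling step as the crux, and in fact you are more careful than the paper here. The paper's proof simply asserts, citing only the definition of progress knowledge, that the intermediate configuration of the $m$-run is $\configTwo{c}{m'}$; but the definition of $k_\to$ gives no such guarantee. As stated for \emph{arbitrary} $c_0$, the lemma is actually false: take $c_0=\ifThenelse{h>0}{l:=1}{\Skip}$ and $c=(l:=2)$. The program $c$ trivially satisfies the hypothesis; with $i(h)=0$ one gets $\vec\ell_0=\epsilon$, $i'=i$, $\ell'=(l,2)$; yet any $m$ with $m_\Public=i_\Public$ and $m(h)>0$ lies in $k_\to(c_0;c,\,i_\Low,\,\epsilon)$ (its first low event is $(l,1)$) and not in $k(c_0;c,\,i_\Low,\,(l,2))$. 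So your caveat---that the peeling step only goes through when $c_0$ is a declassification-free, well-typed prefix so that Lemma~\ref{lemma:nodeclass:ni} pins down its low behaviour---is not merely a nicety but is needed for the statement to hold at all, and it matches exactly how the lemma is used downstream.
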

 \begin{proof}
 Assume the set inclusion of the lemma's statement does not
 hold. By Definition~\ref{def:knowledge:progress}, there must  exist an initial memory $m$, such that  $m
 =_\Public i$ and $\configTwo{c_0;c}{m} \labarrowstar{}{\vec \ell_0}
 \configTwo{c}{m'} \labarrowstar{}{\ell''}$, but $ \ell' \neq
 \ell''$. Because $m =_\Public i$ and both traces produce $\vec \ell_0$, it
must also be that $m' =_\Public i'$. But this also implies that
$m' \not\in k (c, i'_\Public, \ell')$, that is, 
$k_\to (c, i'_\Public, \epsilon) \not\subseteq k(c,i'_\Public,
\ell')$, which contradicts the main assumption about $c$.  
\end{proof}

The following two helper lemmas correspond to the sequential composition sub-cases of the
Advancement Lemma.
Lemma~\ref{lemma:control:seq1} captures the special case when the first command in the sequential composition $c_1[\bullet]; c_2[\bullet]$ does not produce a declassification event, while
Lemma~\ref{lemma:control:seq2} considers the general case when a declassification event may be produced by either of $c_1[\bullet]$ or $c_2[\bullet]$.

\begin{lem}[Sequential composition 1]
 \label{lemma:control:seq1}

 Given 
 \begin{iteMize}{$\bullet$}
 \item program $\vec c_0[\vbullet]$ such that $ \G, \pc \vdash c_0[\vbullet]$,
 \item initial memory $m_0$,
 \item two initial attacks $\vec a_0, \vec b_0$, 
 \item two intermediate configurations $\configTwo{c_1[\vec a_1]; c_2[\vec a_2]}{m}$ and 
 $\configTwo{c_1[\vec b_1]; c_2[\vec b_2]}{s}$ such that 
 \item 
   $\configTwo{c_0[\vec a_0]}{m_0} \labarrowstar{}{\vec t'} 
   \configTwo{c_1[\vec a_1]; c_2[\vec a_2]}{m}  \labarrowstar{}{\vec t_\alpha} \configTwo{c_2[\vec a_2]}{m'}
   \labarrowstar{}{\vec t_\beta \tconcat r}  $ 
 \item 
   $\configTwo{c_0[\vec b_0]}{m_0} \labarrowstar{}{\vec{q'}} 
   \configTwo{c_1[\vec b_1]; c_2[\vec b_2]}{s}  \labarrowstar{}{\vec{q''} \tconcat r'}$
 \item 
 $\mathrm{PINI}(c_0[\vec a_0], m_0, \vec t', \vec{t'} \tconcat \vec t_\alpha \tconcat \vec t_\beta)$
 \item
 $\mathrm{PINI}(c_0[\vec b_0], m_0, \vec{q'}, \vec{q'} \tconcat \vec{q''})$
 \item $r$ and $r'$ are declassification events 
 \item 
 $\vec b_0 \not\in \Corrupt{ c_0[\vbullet], m_0, \vec{t'} \tconcat \vec t_\alpha \tconcat \vec t_\beta \tconcat r}$
 \item $\vec t'_\star = \vec{q'}_\star$
 \item
 $m =_\Trusted s$
 \end{iteMize}
 then $\vec{q''} = \vec q_\alpha \tconcat \vec q_\beta$ such that 
 \begin{iteMize}{$\bullet$}
 \item 
 $  \configTwo{c_1[\vec a_1]; c_2[\vec a_2]}{s}  \labarrowstar{}{\vec q_\alpha} \configTwo{c_2[\vec a_2]}{s'} 
 \labarrowstar{}{\vec q_\beta \tconcat r}
 $

 \item
 $\vec {t_\alpha}_\star = \vec {q_\alpha}_\star$
 \item 
 $m' =_\Trusted s'$
 \end{iteMize}

 \end{lem}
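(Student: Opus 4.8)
The plan is to prove the lemma by induction on the number of small-step transitions in the $c_1$-phase of the $a$-run --- i.e. in $\configTwo{c_1[\vec a_1]; c_2[\vec a_2]}{m} \labarrowstar{}{\vec t_\alpha} \configTwo{c_2[\vec a_2]}{m'}$ --- with a case split on the top-level shape of $c_1$. The property carried through the induction is \emph{trusted synchronization} of the two runs: whenever the $a$-run is at a configuration reached under a trusted $\pc$, the $b$-run is at the same trusted control point, its memory is $=_\Trusted$ to the $a$-memory, and the two runs have emitted the same trusted events and the same $\mathit{endorse}$ events; and whenever the $a$-run enters an \emph{untrusted context} --- the body of an $\cod{if}$ or $\cod{while}$ with an untrusted guard, or a hole --- the $b$-run may behave differently there, but by (T-ASGMT), (T-ENDORSE) and (T-HOLE) every command executed in such a context is typed with a $\pc$ that does not satisfy $\pc \sqsubseteq (\Low, \Trusted)$, hence writes no trusted variable and performs no endorsement, so when the context closes the runs re-synchronize with the trusted memory unchanged. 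Trusted synchronization at the moment $c_1$ finishes is precisely the three conclusions $\vec{q''} = \vec q_\alpha \tconcat \vec q_\beta$, $\vec{t_\alpha}_\star = \vec{q_\alpha}_\star$, and $m' =_\Trusted s'$.

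I would first dispatch the structural cases. For $c_1 = \Skip$ the step is trivial. For $c_1 = (x := e)$ with $e$ not a declassification, (T-ASGMT) gives $\level \join \pc \sqsubseteq \G(x)$: if $\G(x)$ is trusted then $\level$ and $\pc$ are trusted and the value assigned is a function of the trusted memory, shared because $m =_\Trusted s$, preserving $=_\Trusted$ and the trusted event; if $\G(x)$ is untrusted, no trusted event is emitted and trusted memory is untouched. For $c_1 = [\bullet]$, (T-HOLE) forces $\pc \sqsubseteq (\Low, \Untrusted)$, so attacker code writes only low untrusted variables and, by fairness (Definition~\ref{def:fairattack}), emits no trusted event; the trusted memory and the continuation are unchanged and the induction hypothesis applies. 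For $c_1 = \ifThenelse{e}{d_1}{d_2}$ or $c_1 = \while{e}{d}$: if $e$ is trusted, $m =_\Trusted s$ makes the guard evaluate identically, the same branch or iteration is taken, and we recurse on the strictly shorter continuation; if $e$ is untrusted, both runs enter an untrusted context, which we "fast-forward" past without disturbing trusted memory or trusted and $\mathit{endorse}$ events, then recurse. For $c_1 = (d_1; d_2)$ we re-associate as $d_1; (d_2; c_2)$ and apply the induction hypothesis twice, once for $d_1$ and once for $d_2$; since $c_1$ emits no declassification event, neither does $d_1$ nor $d_2$.

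The decisive case is $c_1 = (x := \Endorse{e})$. By (T-ENDORSE) this statement runs under a trusted $\pc$, so by the invariant both runs are at exactly this endorsement, the $a$-run emits $\mathit{endorse}(\eta, v)$, the $b$-run emits $\mathit{endorse}(\eta, v')$, and the earlier $\mathit{endorse}$ events of the two runs agree and form a prefix of the endorsement events of $\vec{t'} \tconcat \vec t_\alpha \tconcat \vec t_\beta \tconcat r$. If $v \neq v'$ then the trace of $\configTwo{c_0[\vec b_0]}{m_0}$ lies in $\idxcorrupt{i}{\vec{t'} \tconcat \vec t_\alpha \tconcat \vec t_\beta \tconcat r}$ for the index $i$ of this endorsement, i.e. $\vec b_0 \in \Corrupt{c_0[\vbullet], m_0, \vec{t'} \tconcat \vec t_\alpha \tconcat \vec t_\beta \tconcat r}$, contradicting the hypothesis; hence $v = v'$, and since (T-ENDORSE) requires $\ell \sqcap (\Secret, \Trusted) \sqsubseteq \G(x)$ the update preserves $=_\Trusted$, and the induction hypothesis closes the case. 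Once all of $c_1$ has been traversed, both runs are at $c_2$ with $=_\Trusted$ memories and agreeing trusted events; applying the same synchronization reasoning to the declassification-free prefix $\vec t_\beta$ of the $c_2$-phase shows that both runs reach the statement producing the declassification event at $=_\Trusted$ memories, and --- since by (T-ASGMT) a declassifying assignment runs under a trusted $\pc$ and its declassified expression depends only on trusted variables --- both runs emit the same event, so $r' = r$ and $\vec{q''} = \vec q_\alpha \tconcat \vec q_\beta$ as claimed.

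The step I expect to be the main obstacle is making the "fast-forward past an untrusted context" move precise while keeping the low-untrusted event bookkeeping honest, and in particular arguing that the $c_1$-phase of the $b$-run \emph{terminates} exactly when that of the $a$-run does. Termination of the $c_1$-phase is a trusted-control-flow property: a $\cod{while}$ in $c_1$ with a trusted guard is handled by the synchronization invariant, while one with an untrusted guard lies inside an untrusted context, where no declassification event can ever be produced, so a divergent such loop in the $b$-run would make the given reduction $\configTwo{c_1[\vec b_1]; c_2[\vec b_2]}{s} \labarrowstar{}{\vec{q''} \tconcat r'}$ with $r'$ a declassification event impossible. Getting this termination/divergence argument to mesh with the inductive measure --- so that fast-forwarding a possibly nested untrusted block is genuinely well-founded and the accompanying $\mathrm{PINI}$ obligations can be propagated via Lemmas~\ref{lemma:nodeclass:ni} and~\ref{lemma:knowledge:tail} --- is the part that needs the most care.
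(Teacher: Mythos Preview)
Your proposal is correct and follows essentially the same route as the paper: a case analysis on the top-level shape of $c_1$ (the paper phrases it as structural induction on $c_1[\vbullet]$, you as induction on transition length with a case split, but the cases and their treatment---fairness for holes, integrity of the guard for $\cod{if}$/$\cod{while}$, re-association for sequential composition, and the $\Corrupt$ hypothesis to force agreement at $\cod{endorse}$---coincide). One small note: the paper treats this lemma as only establishing the $c_1$-phase synchronization ($\vec{t_\alpha}_\star = \vec{q_\alpha}_\star$ and $m' =_\Trusted s'$), deferring the $c_2$-phase agreement and $r = r'$ to Lemma~\ref{lemma:control:seq2} and the Advancement Lemma, so your closing paragraph about propagating into $\vec t_\beta$ is doing a bit more than this lemma strictly requires.
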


 \begin{proof}
 By induction on the structure of $c_1[\vbullet]$. Case $\Skip$ is immediate. Consider the other cases.

 \begin{iteMize}{$\bullet$}

 \item case $[\vbullet]$

 In this case $\vec a_1 = a_1$ and $\vec b_1 = b_1$.
By assumption, $a_1$ and $b_1$ are fair attacks, which means that $\vec t_\alpha$ has no release events and no
assignments to trusted variables. Similarly, because no low assignments can be produced when running $\vec b_1$, then by 
Definition~\ref{def:fairattack} there must be $s'$ and $\vec q_\alpha$ that would satisfy the demand of the lemma.

 \item case $x := e $

We consider confidentiality and integrity properties separately.
 \begin{desCription}
 \item\noindent{\hskip-12 pt\bf Confidentiality:}\
   We show that even if a low event is possible, it is not a release event. We have two
   cases, based on the confidentiality level of $x$.
 \begin{enumerate}[(a)]
 \item
 $\G(x) = (\Public, \_)$

 A low event is generated by the low assignment. By Lemma~\ref{lemma:nodeclass:ni} and Lemma~\ref{lemma:knowledge:tail}
  the assignment must not be a release event. 

 \item
 $\G(x) = (\Secret, \_)$

 In this case no low events are generated.
 \end{enumerate}
 \item\noindent{\hskip-12 pt\bf Integrity:}\
 Next, we show that the resulting memories agree on trusted values. The two cases are 
 \begin{enumerate}[(a)]
 \item
 $\G(x) = (\_, \Trusted)$
 In this case it must be that $\G (e) = (\_, \Trusted)$ and, hence, $m(e) = s(e)$. Therefore $m' =_\Trusted s'$. 
 \item

 $\G(x) = (\_, \Untrusted)$ 
 Assignment to $x$ does not change how memories agree on trusted values.
 \end{enumerate}

 \end{desCription}
 \item case $x := \EndorseLab{\eta}{e}$

 We consider the confidentiality and integrity properties of this command separately. 

 \begin{desCription}
 \item\noindent{\hskip-12 pt\bf Confidentiality:}\ Similar to the case
 for assignment. 
 \item\noindent{\hskip-12 pt\bf Integrity:}\ We consider two cases. 
 \begin{enumerate}[(a)]
 \item $\G(x) = (\_, \Trusted)$

   In this case, the trace produces an event $\mathit{endorse} (\eta,
   v)$. We note $\vec b_0 \not\in
   \Corrupt{c_0[\vbullet], m_0, \vec{t'} \tconcat \vec t_\alpha \tconcat \vec t_\beta \tconcat r}$.  In particular, we have that 
$\vec{q'} \tconcat \vec{q''} \tconcat r' \not\in \corrupt{\vec{t'} \tconcat \vec t_\alpha \tconcat \vec t_\beta \tconcat r}$. If we assume 
that the current command is the $i$-th endorsement in the trace, we have that 
$\vec{q'} \tconcat \vec{q''} \tconcat r' \not\in \idxcorrupt{i}{\vec{t'} \tconcat \vec t_\alpha \tconcat \vec t_\beta \tconcat r}$. 
But we also know that $\vec{t'}_\star =_\Trusted \vec{q'}_\star$. Because, by the rule (T-ENDORSE), 
the result of endorsement is assigned to trusted variables, 
this implies that both $\vec{q'}$ and $\vec{t'}$ must agree on the endorsed
   values. Therefore, the only possibility with which $\vec{q'} \tconcat \vec{q''} \tconcat r' \not\in \idxcorrupt{i}{\vec t}$ 
is that
   $\vec{q''}$ generates $\mathit{endorse}(\eta, v)$ as well. This implies that value $v$ is assigned to $x$ in both cases, which 
   guarantees that $m' =_\Trusted s'$.

 \item $\G(x) = (\_, \Untrusted)$ 
Not applicable by (T-ENDORSE).

 \end{enumerate}
 \end{desCription}

 \item case $c_\alpha; c_\beta$
 \

 By two applications of induction hypothesis: one to $c_\alpha; (c_\beta; c_2[\vbullet])$ and the other one 
 to $c_\beta; c_2[\vbullet]$.
 \item case $\ifThenelse{e}{c_{\cod{true}}}{c_{\cod{false}}}$

 We have the following cases based on the type of expression $e$. 

 \begin{enumerate}
 \item $\G(e) = (\_, \Trusted)$

 In this case both branches are taking the same branch and we are done by induction hypothesis. 

 \item $\G(e) = (\_, \Untrusted)$

   In this case neither of $c_\cod{true}$ or $c_\cod{false}$ contain declassifications or high
   integrity assignments. This guarantees that $m'=_\Trusted s'$. 

 \end{enumerate}

 \item case $\while{e}{c_{\cod{loop}}}$

 Similar to sequential composition and conditionals.

 \end{iteMize}

 \end{proof}

 \begin{lem}[Sequential composition 2]
 \label{lemma:control:seq2}
 Given 
 \begin{iteMize}{$\bullet$}
 \item program $c_0[\vbullet]$ such that $\G, \pc \vdash c_0[\vbullet]$
 \item initial memory $m_0$
 \item two initial attacks $\vec a_0, \vec b_0$ 
 \item two intermediate configurations $\configTwo{c_1[\vec a_1]; c_2[\vec a_2]}{m} $
 and $\configTwo{c_1[\vec b_1]; c_2[\vec b_2]}{s}$ such that 

 \item 
 $\configTwo{c_0[\vec a_0]}{m_0} \labarrowstar{}{\vec{t'}}
 \configTwo{c_1[\vec a_1]; c_2 [\vec a_2]}{m}
 \labarrowstar{}{\vec{t''}}
 \configTwo{c'_1[\vec a'_1]; c_2[\vec a_2]}{m'}
 \labarrow{}{(x,v)}
 \configTwo{c''_1[\vec a''_1]; c_2[\vec a_2]}{m''}
 $

 \item 
 $\configTwo{c_0[\vec b_0]}{m_0} \labarrowstar{}{\vec{q'}}
 \configTwo{c_1[\vec b_1]; c_2[\vec b_2]}{s}
 \labarrowstar{}{\vec{q''}}
 \configTwo{d'}{s'}
 \labarrow{}{ (y,u)}
 \configTwo{d''}{s''}
 $

 \item $\mathrm{PINI}(c_0[\vec a_0], m_0, \vec{t'}, \vec{t'} \tconcat  \vec{t''})$

 \item $\mathrm{PINI}(c_0[\vec b_0], m_0, \vec{q'}, \vec{q'} \tconcat \vec{q''})$

 \item $(x,v)$ and $(y,u)$ are declassification events

 \item $\vec b_0 \not\in \Corrupt{c_0[\vbullet], m_0, \vec{t'} \tconcat \vec{t''} \tconcat (x,v) }$

 \item ${\vec{t'}}_\star   = {\vec{q'}}_\star $

 \item $m=_\Trusted s$ 
 \end{iteMize}
 then 

 \begin{iteMize}{$\bullet$}
 \item $\vec a'_1 = \vec a''_1$
 \item there is $b'_1$ such that 
 \item $d' = c'_1[\vec b'_1]; c_2[\vec b_2]$
 \item $d'' = c''_1[\vec b'_1]; c_2[\vec b_2]$
 \item ${\vec{t''}}_\star   = {\vec{q''}}_\star $
 \item $m' =_\Trusted s'$
 \item $m'' =_\Trusted s''$
 \item $(x,v) = (y,u)$. 

 \end{iteMize}
 \end{lem}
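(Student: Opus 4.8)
The plan is to prove Lemma~\ref{lemma:control:seq2} by induction on the structure of $c_1[\vbullet]$, closely mirroring the proof of Lemma~\ref{lemma:control:seq1} (with the customary subsidiary induction on the length of the $\vec a$-run for the $\while{e}{c}$ case, since loop unfolding does not decrease the syntactic size of $c_1$). For each form of $c_1[\vbullet]$ I would establish all of the stated conclusions simultaneously. Several forms make the hypotheses unsatisfiable and are therefore vacuous: $c_1 = [\vbullet]$ emits only attacker events, never a declassification event; $c_1 = x := \EndorseLab{\eta}{e}$ takes a single step emitting an endorsement event, not a declassification event; $c_1 = x := e$ with $e$ declassification-free produces a low event that is not a declassification event; and $c_1 = \Skip$ is immediate.

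The first substantive case is $c_1 = x := e$ with $e$ containing a declassification. Here (T-ASGMT) together with (T-DECL) forces a trusted context ($\pc \sqsubseteq (\Low, \Trusted)$) and every variable of $e$ to be trusted, so the value of $e$ depends only on the trusted part of memory. From the shape of the hypotheses the assignment must fire in one step on both sides (${\vec{t''}} = {\vec{q''}} = \epsilon$, $c'_1 = c_1$, and $\vec a_1 = \vec a'_1 = \vec a''_1 = \vec b_1 = \vec b'_1 = \epsilon$), from memories with $m =_\Trusted s$; hence $m(e) = s(e)$, so $(x,v) = (y,u)$, ${\vec{t''}}_\star = {\vec{q''}}_\star = \epsilon$, and the identical updates preserve trusted agreement, giving $m' =_\Trusted s'$ and $m'' =_\Trusted s''$, while the residual configurations match by construction. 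For $c_1 = \ifThenelse{e}{c_{\cod{true}}}{c_{\cod{false}}}$ and $c_1 = \while{e}{c_{\cod{loop}}}$ I split on $\G(e)$: when $e$ is trusted, $m =_\Trusted s$ forces both runs to make the same branching/loop decision, the empty guard event preserves every hypothesis, and I apply the induction hypothesis to the selected branch (respectively to $c_{\cod{loop}}; \while{e}{c_{\cod{loop}}}$); when $e$ is untrusted, (T-IF)/(T-WHILE) type the body under an untrusted $\pc$, so by (T-ASGMT) no declassification can occur in it, contradicting the assumption that $c_1$ emits a declassification event, and the case is vacuous.

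The main obstacle is the sequential case $c_1 = c_\alpha; c_\beta$, which needs a case split together with an appeal to Lemma~\ref{lemma:control:seq1}. Either $(x,v)$ is emitted while $c_\alpha$ is still running, in which case I re-associate the configuration as $c_\alpha[\vec a_1]; \bigl(c_\beta[\vbullet]; c_2[\vec a_2]\bigr)$ and apply the induction hypothesis with $c_\alpha$ in the role of $c_1$ and $c_\beta; c_2$ in the role of $c_2$; or $c_\alpha$ terminates before $(x,v)$, in which case its portion of ${\vec{t''}}$ carries no declassification event (PINI along ${\vec{t''}}$ equates progress and ordinary knowledge there, and by the type system a non-progress release event can occur only at a declassification), so Lemma~\ref{lemma:control:seq1} applies to $c_\alpha$ and yields that the $\vec b$-run also runs $c_\alpha$ to completion, with matching trusted events and with trusted memory still in agreement at the common residual $c_\beta; c_2$; I then apply the induction hypothesis to $c_\beta$ from that point and splice the two halves together. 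The real work is the bookkeeping: checking that the PINI hypotheses restrict to the relevant sub-intervals, that ${\vec{t'}}_\star = {\vec{q'}}_\star$ propagates to the new split point using the trusted-event agreement returned by Lemma~\ref{lemma:control:seq1}, and that $\vec b_0 \notin \Corrupt{c_0[\vbullet], m_0, \vec{t'} \tconcat \vec{t''} \tconcat (x,v)}$ transfers to every recursive and auxiliary invocation. This last hypothesis is precisely what the endorsement subcase of Lemma~\ref{lemma:control:seq1} consumes in order to force the two runs to agree on every endorsed value, which is in turn what keeps trusted memory synchronized across endorsements occurring inside $c_\alpha$.
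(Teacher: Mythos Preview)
Your proposal is correct and follows essentially the same approach as the paper: structural induction on $c_1[\vbullet]$, with the same vacuous cases ($[\vbullet]$, $\Skip$, endorsement, assignment without declassification), the same argument for assignment with declassification via trusted agreement of memories, the same split on the integrity of $e$ for conditionals and loops, and the same two-subcase treatment of sequential composition that invokes Lemma~\ref{lemma:control:seq1} when $c_\alpha$ terminates before the declassification and the induction hypothesis on $c_\alpha$ otherwise. Your write-up is in fact more explicit than the paper's about the subsidiary length induction for $\cod{while}$ and about the bookkeeping needed to thread the PINI, trusted-event, and $\Corrupt{\cdot}$ hypotheses through the recursive calls.
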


 \begin{proof}
 By induction on the structure of $c_1[\vbullet]$. In the cases of $[\vbullet]$,  $\Skip$, 
and  $x := \Endorse{e}$ 
no declassification events may be produced, so these cases are impossible.

 \begin{iteMize}{$\bullet$}

\item $x := e$. When $D = \emptyset$, no declassification events may be produced. 
When $ D \neq \emptyset$, a   declassification event is produced by
both traces.
Also, $\vec{t''} = \vec{q''} = \epsilon$, and $m' = m$ and $s'= s$. 
Because $m =_\Trusted s$ and
 $ \G (e) = (\_,\Trusted)$ we have that
both traces produces the same declassification event $(x,v)$, and therefore,
 $m'' =_\Trusted s''$.
 \item case $c_\alpha[a_\alpha]; c_\beta[a_\beta]$

 We have two cases depending on whether $c_\alpha[a_\alpha]$ generates low events:

 \begin{enumerate}
 \item

 $\configTwo{c_\alpha[a_\alpha]; (c_\beta [a_\beta]; c_2[a_2])}{m}
 \labarrowstar{}{\ell_1 \cdots \ell_N}
 \configTwo{c_\beta[a_\beta]; c_2[a_2]}{m'}
 $ 
 In this case by Lemma~\ref{lemma:control:seq1} it must be that 
 $
 \configTwo{c_\alpha[b_\alpha]; (c_\beta [b_\beta]; c_2[b_2])}{m} 
 \labarrowstar{}{\ell_1 \cdots \ell_N'}
 \configTwo{c_\beta[b_\beta]; c_2[b_2]}{s'} 
 $
 such that $m' =_\Trusted s'$. Then we can apply the induction hypothesis to $c_\beta[\bullet]$. 
 \item
 In this case $(x,v)$ is produced by $c_\alpha[a_\alpha]$ and we are done by application of induction hypothesis to $c_\alpha[\bullet]$.
 \end{enumerate}

 \item case $\ifThenelse{e}{c_{\cod{true}}}{c_{\cod{false}}}$

 \
 We have two cases:

 \begin{enumerate}
 \item $\G(e) = ( \_, \Trusted)$ 

 In this case both branches take the same command, and we are done by the induction hypothesis.

 \item $\G(e) = (\_, \Untrusted)$. 

 Impossible, because declassification events are not allowed in untrusted integrity contexts.

 \end{enumerate}

 \item case $\while{e}{c_{\cod{loop}}}$

 \

 Similar to sequential composition and conditionals.

 \end{iteMize}

 \end{proof}

 \begin{lem}[Advancement]
 \label{lemma:control:advancement}
 Given 
 \begin{iteMize}{$\bullet$}
 \item program $c_0[\vbullet]$ such that $\G, \pc \vdash c_0[\vbullet]$ 
 \item initial memory $m_0$
 \item two initial attacks $\vec a_0,  \vec b_0$
 \item two intermediate configurations $\configTwo{c[\vec a]}{m}$ and $\configTwo{c[\vec b]}{s}$ such that  
 \item $\configTwo{c_0[\vec a_0]}{m_0} \labarrowstar{}{\vec{t'}} \configTwo{c[\vec a]}{m} \labarrowstar{}{\vec{t''}} 
 \configTwo{c'[\vec a']}{m'} \labarrow{}{(x,v)} \configTwo{c''[\vec a'']}{m''}
  $ 
 \item $\configTwo{c_0[\vec b_0]}{m_0} \labarrowstar{}{\vec{q'}} \configTwo{c[\vec b]}{s} \labarrowstar{}{\vec{q''}}
 \configTwo{d'}{s'} \labarrow{}{(y,u)} \configTwo{d''}{s''}$ 

 \item $\mathrm{PINI}(c_0[\vec a_0], m_0, \vec{t'}, \vec{t'}  \tconcat \vec{t''})$ 

 \item $\mathrm{PINI}(c_0[\vec b_0], m_0, \vec{q'}, \vec{q'} \tconcat \vec{q''})$

 \item $(x,v)$ and $(y,u)$ are declassification events

 \item $\vec b \not\in \Corrupt{c_0[\vbullet], m_0, \vec{t'} \tconcat \vec{t''} \tconcat (x,v) }$

 \item ${\vec{t'}}_\star   = {\vec{q'}}_\star $

 \item $m=_\Trusted s$ 
 \end{iteMize}
 then
 \begin{iteMize}{$\bullet$}
 \item $\vec a' = \vec a''$
 \item there is $b'$ such that 
 \item $d' = c'[\vec b']$
 \item $d'' = c''[\vec b']$
 \item ${\vec{t''}}_\star   = {\vec{q''}}_\star $
 \item $m' =_\Trusted s'$
 \item $m'' =_\Trusted s''$
 \item $(x,v) = (y,u)$. 
 \end{iteMize}
 \end{lem}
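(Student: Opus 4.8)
The plan is to prove the Advancement Lemma by a well-founded induction whose primary case analysis is on the structure of the common command $c[\vbullet]$, delegating the sequential-composition case to the already-proven Lemmas~\ref{lemma:control:seq1} and~\ref{lemma:control:seq2}. The measure for the induction is the number of transitions in the $\vec a$-run from $\configTwo{c[\vec a]}{m}$ to $\configTwo{c''[\vec a'']}{m''}$, so that unrolling a $\cod{while}$ loop counts as progress even though the residual command grows. Intuitively the lemma is the ``leaf'' of Figure~\ref{fig:proof:soundness}: two runs that have stayed in lockstep on the trusted backbone (hypotheses $\vec{t'}_\star = \vec{q'}_\star$ and $m =_\Trusted s$) and have not diverged on $\mathit{endorse}$ events (hypothesis $\vec b \notin \Corrupt{c_0[\vbullet], m_0, \cdots}$) are shown to remain in lockstep through the next declassification event.

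First I would discharge the cases in which no declassification event can be emitted, so that the statement holds vacuously: $c = \Skip$, $c = [\bullet]$ (by (T-HOLE) attacker code runs under an untrusted $\pc$, and by (T-ASGMT) a declassifying assignment needs a trusted $\pc$, so no declassification is reachable inside a hole), $c = x := \Endorse{e}$ (which emits only an $\mathit{endorse}$ event and then halts), and $c = x := e$ with $D = \emptyset$. The one genuine base case is a declassifying assignment $c = x := e$ with $D \ne \emptyset$: both runs make a single step emitting $(x, m(e))$ and $(x, s(e))$; by (T-ASGMT) every declassified variable has high integrity and the surrounding $\pc$ is trusted, so the released value depends only on trusted data, whence $m =_\Trusted s$ forces $m(e) = s(e)$ and the two declassification events coincide; the same rule forces $\G(x)$ to be untrusted, so writing $x$ does not disturb trusted agreement and $m'' =_\Trusted s''$ follows, while $\vec a' = \vec a''$ holds because the step consumes no hole.

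For $c = \ifThenelse{e}{c_1}{c_2}$ and $c = \while{e}{c_{\cod{loop}}}$, the guard must be trusted: by (T-IF) / (T-WHILE) an untrusted guard types the branch or body under an untrusted $\pc$, which by (T-ASGMT) rules out the declassification event assumed to occur further along. Hence $m(e) = s(e)$ and both runs make the same control-flow choice. For $\cod{if}$, both runs enter the same subterm $c_i$ and I invoke the induction hypothesis. For $\cod{while}$, if both exit then nothing follows $\halt$ and the case is vacuous, and if both enter then both reduce in one step to $c_{\cod{loop}}; \while{e}{c_{\cod{loop}}}$ and I continue as in the sequential-composition case, with a strictly smaller step count. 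Finally, for $c = c_1; c_2$ I split on whether the declassification event is emitted while the residual command still has the form $c'_1; c_2$ --- then Lemma~\ref{lemma:control:seq2} applies directly --- or only after $c_1$ has reduced to $\halt$ without a declassification event, in which case Lemma~\ref{lemma:control:seq1} transports run $\vec b$ to a configuration $\configTwo{c_2[\vec b_2]}{s'}$ with $m' =_\Trusted s'$, matching trusted events, and the PINI and irrelevance hypotheses re-established for the tail, after which the induction hypothesis applied to $c_2$ closes the case.

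I expect the main obstacle to be hypothesis plumbing rather than any single deep step. One must check that every hypothesis needed by the delegated lemmas and by the induction hypothesis is actually available at the point of use: the two $\mathrm{PINI}$ predicates, which have to be relativized to the correct subtraces on each recursive call; the trusted-event agreement; and especially the ``not an irrelevant attack'' condition, whose index into the sequence of $\mathit{endorse}$ events must be tracked consistently through sequential composition and loop unrollings (this is exactly where Lemmas~\ref{lemma:control:seq1} and~\ref{lemma:control:seq2} do the work, so the remaining effort is to instantiate them correctly). A further point to be careful about is that in each ``vacuous'' case the type system genuinely forbids a declassification event, which hinges on how (T-IF), (T-WHILE), and (T-HOLE) propagate the $\pc$-label, and that in the declassifying-assignment case the released quantity really is a function of trusted data only.
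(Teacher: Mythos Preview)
Your proposal follows the paper's proof closely: case analysis on $c[\vbullet]$ with the same vacuous cases, the declassifying assignment as the base case, sequential composition delegated to the sequential-composition lemmas, and $\cod{if}$/$\cod{while}$ handled via the observation that the guard must be trusted. Your explicit step-count measure for the $\cod{while}$ unrolling and your two-way split of the sequential case (declassification in $c_1$ via Lemma~\ref{lemma:control:seq2}, versus $c_1$ finishing silently and then using Lemma~\ref{lemma:control:seq1} plus the induction hypothesis on $c_2$) are in fact more careful than the paper, which simply writes ``By Lemma~\ref{lemma:control:seq2}'' and ``similar to sequential composition and conditionals.''

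One small slip: in the declassifying-assignment case, (T-ASGMT) does \emph{not} force $\G(x)$ to be untrusted. The reason $m'' =_\Trusted s''$ holds is the one you already gave just before that claim---since all declassified variables are trusted and $m =_\Trusted s$, both runs write the \emph{same} value to $x$---so trusted agreement is preserved regardless of the integrity of $\G(x)$; this is exactly the paper's argument.
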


 \begin{proof}
By induction on $c[\vbullet]$.   
In the cases of $[\vbullet]$,  $\Skip$,  and  $x := \Endorse{e}$,
no declassification events may be produced, so these cases are impossible.

 \begin{iteMize}{$\bullet$}

\item $x := e$. In case $D = \emptyset$, no declassification events may be produced. 
When $ D \neq \emptyset$, a   declassification event is produced by
both traces.
Also, $\vec{t''} = \vec{q''} = \epsilon$, and $m' = m$ and $s'= s$. 
Because $m =_\Trusted s$ and
 $ \G (e) = (\_,\Trusted)$ we have that
both traces produces the same declassification event $(x,v)$, and therefore,
 $m'' =_\Trusted s''$.

 \item case $c_\alpha; c_\beta$

 By Lemma~\ref{lemma:control:seq2}.

 \item case $\ifThenelse{e}{c_{\cod{true}}}{c_{\cod{false}}}$

 We have two cases:

 \begin{enumerate}

 \item $\G(e) = ( \_, \Trusted)$ 

 In this case both branches take the same command and we are done by induction hypothesis.

 \item $\G(e) = (\_, \Untrusted)$. 

 Impossible, because declassification events are not allowed in untrusted integrity contexts.

 \end{enumerate}

 \item case $\while{e}{c_{\cod{loop}}}$

 Similar to sequential composition and conditionals.\

 \end{iteMize}
 \end{proof}

 \begin{lem}[Control Backbone]
 \label{lemma:control:backbone}
 Given $\G, \pc \vdash c[\bullet]$, memory $m$, an initial attack $\vec a$ and a trace $\vec t $, 
such that
 \begin{multline*}
 \configTwo{c[\vec a]}{m}
 \labarrowstar{}{\vec t_1}
 \configTwo{c_1[\vec a_1]}{m_1}
 \labarrow{}{r_1}
 \configTwo{c'_1[\vec a'_1]}{m'_1}
 \labarrowstar{}{\vec t_{2} \cdots {r_{i-1}}} \\
 \configTwo{c'_{i-1} [\vec{a'}_{i-1}]}{m'_{i-1}}
 \labarrowstar{}{\vec t_i} 
 \framebox{$
 \configTwo{c_i[\vec a_i]}{m_i}
 \labarrow{}{r_i}
 \configTwo{c'_i[\vec{a'}_i]}{m'_i}
 $}
 \labarrowstar{}{}
 \dots
 \end{multline*}
 where $r_i$ are declassification events, then for all 
 $\vec{b}, \vec{q}$ such that 
$(\vec a, \vec t) \sim^{c[\vbullet], m}_\to (\vec b, \vec q) $ and $\vec b \not\in \Corrupt{c[\vbullet], m, \vec t}$, it 
holds that the respective configurations (highlighted in boxes here) match at the 
 declassification events, that is
 \begin{multline*}
 \configTwo{c[\vec b]}{m}
 \labarrowstar{}{\vec q_1}
 \configTwo{c_1[\vec b_1]}{s_1}
 \labarrow{}{r_1}
 \configTwo{c'_1[\vec{b'}_1]}{s'_1}
 \labarrowstar{}{\vec q_2 \cdots r_{i-1}} \\
 \configTwo{c'_{i-1}[\vec{b'}_{i-1}]}{s'_{i-1}} 
 \labarrowstar{}{\vec q_i}
 \framebox{$
 \configTwo{c_i[\vec b_i]}{s_i}
 \labarrow{}{r_i}
 \configTwo{c'_i[\vec{b'}_i]}{s'_i} 
 $}
 \labarrowstar{}{} \dots
 \end{multline*}
where $i$ ranges over the number of declassification events in $\vec t$, and moreover
 \begin{iteMize}{$\bullet$}

 \item
 $m_{i} =_\Trusted s_{i}$  
 and 
 $m'_{i} =_\Trusted s'_{i}$  
 \item
 $\vec {q_i}_\star = \vec{t_i}_\star$

 \end{iteMize}
 \end{lem}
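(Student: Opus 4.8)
I would prove the lemma by induction on $i$, the index of the declassification event, stringing together applications of the Advancement Lemma (Lemma~\ref{lemma:control:advancement}), one per declassification event. The induction hypothesis at stage $i-1$ is exactly the conclusion of the lemma restricted to the first $i-1$ boxes: the $\vec b$\nobreakdash-run reaches configurations $\configTwo{c_j[\vec b_j]}{s_j} \labarrow{}{r_j} \configTwo{c'_j[\vec{b'}_j]}{s'_j}$ carrying the \emph{same} commands as on the $\vec a$\nobreakdash-side, with $m_j =_\Trusted s_j$, $m'_j =_\Trusted s'_j$ and $\vec{q_j}_\star = \vec{t_j}_\star$ for all $j\le i-1$; in particular the trusted events along the whole prefix up to $r_{i-1}$ agree. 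The base case $i=0$ is vacuous. I would also record up front the key consequence of well-typedness that is used repeatedly: since $c[\vbullet]$ is well typed, every progress-insensitive release event of any run is produced by a declassification statement (contrapositive of Lemma~\ref{lemma:nodeclass:ni}, pushed under sequential composition by Lemma~\ref{lemma:knowledge:tail}), so along any segment of a run lying strictly between two consecutive declassification events $\mathrm{PINI}$ holds.

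\textbf{Inductive step.} I would instantiate Lemma~\ref{lemma:control:advancement} with $c_0[\vbullet]:=c[\vbullet]$, $m_0:=m$, $\vec a_0:=\vec a$, $\vec b_0:=\vec b$, taking its ``intermediate configuration'' to be $\configTwo{c'_{i-1}[\vec{a'}_{i-1}]}{m'_{i-1}}$ and, using the induction hypothesis, $\configTwo{c'_{i-1}[\vec{b'}_{i-1}]}{s'_{i-1}}$ (same command), with $\vec t':=\vec t_1\tconcat r_1\cdots\tconcat r_{i-1}$, $\vec t'':=\vec t_i$, $(x,v):=r_i$, and on the $\vec b$\nobreakdash-side $\vec q':=\vec q_1\tconcat r_1\cdots\tconcat r_{i-1}$, $\vec q'':=\vec q_i$, with $(y,u)$ the first declassification event the $\vec b$\nobreakdash-run emits after $\configTwo{c'_{i-1}[\vec{b'}_{i-1}]}{s'_{i-1}}$. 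The hypotheses of Lemma~\ref{lemma:control:advancement} are discharged as follows: the two $\mathrm{PINI}$ conditions hold because $r_{i-1},r_i$ (resp.\ $(y,u)$) are consecutive declassification events, so by the well-typedness remark no release event occurs in between; $(x,v)=r_i$ is a declassification event by assumption; $\vec{t'}_\star=\vec{q'}_\star$ and $m'_{i-1}=_\Trusted s'_{i-1}$ come from the induction hypothesis; and $\vec b\notin\Corrupt{c[\vbullet],m,\vec t'\tconcat\vec t''\tconcat(x,v)}$ follows from $\vec b\notin\Corrupt{c[\vbullet],m,\vec t}$ since $\vec t'\tconcat\vec t''\tconcat(x,v)$ is a prefix of $\vec t$ and $\corrupt{\cdot}$ — hence $\Corrupt{c[\vbullet],m,\cdot}$ — is monotone under taking prefixes (a prefix has no more endorsement events, and $\idxcorrupt{k}{\vec s}$ depends only on the first $k$ endorsements of $\vec s$). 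The conclusion of Lemma~\ref{lemma:control:advancement} then gives $d'=c_i[\vec{b'}]$ and $d''=c'_i[\vec{b'}]$ for some $\vec{b'}=:\vec{b'}_i$ (the $i$-th box matches), $m_i=_\Trusted s_i$, $m'_i=_\Trusted s'_i$, $\vec{t_i}_\star=\vec{q_i}_\star$, and $(y,u)=r_i$, which is exactly what extends the induction to stage $i$.

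\textbf{Main obstacle.} The step I expect to be delicate is showing that the choice of $(y,u)$ is legitimate — that the $\vec b$\nobreakdash-run really does emit a declassification event at the point matching $r_i$ and no spurious one in between, so that the trace genuinely decomposes as $\vec q=\vec q_1\tconcat r_1\cdots$ claimed in the statement. This is precisely where $(\vec a,\vec t)\sim^{c[\vbullet],m}_\to(\vec b,\vec q)$ is needed: $r_i$ is a progress-insensitive release event of the $\vec a$\nobreakdash-run, hence a segmentation event of the witnessing $\Seg_\to$, so the matching $\Seg_\to$ for $\vec q$ has a corresponding segmentation event; by the well-typedness remark the knowledge-separating low events of the $\vec b$\nobreakdash-run are exactly its declassification events, so the $j$-th such event of $\vec q$ is a declassification event equal to $r_j$. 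Nailing down this correspondence against the trace decomposition used in the statement — and checking the prefix bookkeeping for $\Corrupt$ and for trusted events — is the fiddly part; once it is in place, the repeated use of the Advancement Lemma is routine and the lemma follows.
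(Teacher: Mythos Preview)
Your proposal is correct and follows exactly the paper's approach: induction on the number of declassification events, with the base case immediate and the inductive step a single application of the Advancement Lemma. The paper's proof is two lines and omits the hypothesis-discharging you spell out; in particular, your ``main obstacle'' (that the $\vec b$-run actually produces a next declassification event so that Lemma~\ref{lemma:control:advancement} applies) is a genuine detail the paper simply elides, and your use of the similarity hypothesis together with the well-typedness observation that release events occur only at declassifications is the intended way to close it.
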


 \begin{proof}
 By induction on the number of declassification events. The base case, where $n = 0$, is immediate. 
For the inductive case,  assume the proposition holds for the first $n$ declassification events in $\vec t$, and apply Lemma~\ref{lemma:control:advancement}.

 \end{proof}
We conclude this section with the proof of Proposition~\ref{prop:control:soundness}.

\

{\bf Proof of Proposition~\ref{prop:control:soundness}}
Consider $\vec b \in R^\triangleright_\to(c[\vbullet], m, \vec a, \vec t) \setminus \Corrupt{c[\vbullet], m, \vec t \tconcat \vec r}$. We want to show 
that $\vec b \in R_\to(c[\vbullet], m, \vec a, \vec t \tconcat \vec r)$. 
Because $\vec b \in R^\triangleright_\to(c[\vbullet], m, \vec a, \vec t)$, we have that 
$\vec b \in 
\{
\vec b\ |\ \exists \vec q~.~(\vec a, \vec t) \sim^{c[\vbullet], m}_\to (\vec b, \vec q) 
~\land 
(
\exists \vec{r'}~.~k_\to(c[\vec b], m_\Public, \vec q_\Public)  \supset
 k(c[\vec b], m_\Public, \vec q_\Public \tconcat \vec{r'}_\Public)
\lor\  \configTwo{c[\vec b]}{m} \Downarrow 
)
\}$. We consider the two cases
\begin{enumerate}[(1)]
\item 
$\vec b \in 
\{
\vec b\ |\ \exists \vec q~.~(\vec a, \vec t) \sim^{c[\vbullet], m}_\to (\vec b, \vec q) 
~\land 
\exists \vec{r'}~.~k_\to(c[\vec b], m_\Public, \vec q_\Public)  \supset
 k(c[\vec b], m_\Public, \vec q_\Public \tconcat \vec{r'}_\Public)
\}
$

\

By definition of $\Corrupt{c[\vbullet], m, \vec t \tconcat \vec r}$, we have that
$\vec b \not\in \Corrupt{c[\vbullet], m, \vec t \tconcat \vec r}
\implies \vec b \not\in \Corrupt{c[\vbullet], m, \vec t}$.
By the Control Backbone Lemma~\ref{lemma:control:backbone}, we have that two traces
agree on the declassification points up to the length of $\vec t$, and
in particular there are $\vec t_0, \vec t_1$, $\vec q_0, \vec q_1$ such that 
$\vec t = \vec t_0 \tconcat \vec t_1 \tconcat \vec r$ and 
$\vec q = \vec q_0 \tconcat \vec q_1$ and that there are no release events
along $\vec t_1$ and $\vec q_1$, for which it holds that 
\begin{multline*}
\configTwo{c [\vec a ]}{m} \labarrowstar{}{\vec t_0} 
\configTwo{c'[\vec a']}{m'} \labarrowstar{}{\vec t_1 \tconcat \vec r}  
\end{multline*}
and 
\begin{multline*}
\configTwo{c [\vec b]}{m} \labarrowstar{}{\vec q_0}
\configTwo{c'[\vec b']}{s'} \labarrowstar{}{\vec q_1 \tconcat \vec{r'}}
\end{multline*}
where $\vec t_\star = \vec q_\star$ and $m' =_\Trusted s'$. By
Advancement Lemma~\ref{lemma:control:advancement}, we obtain that both
traces must agree on $\vec
r$ and $\vec{r'}$. This is sufficient to extend the original
partitioning of $(\vec a, \vec t)$ and $(\vec b, \vec q)$ to 
$(\vec a, \vec t \tconcat \vec r)$ and $ (\vec b, \vec q_0 \tconcat \vec q_1 \tconcat \vec{r'})$
such that 
$(\vec a, \vec t \tconcat \vec r) \sim^{c[\vbullet], m}_\to (\vec b, \vec q_0 \tconcat \vec q_1 \tconcat \vec{r'})$.

\item
 $\vec b \in 
\{
\vec b\ |\ \exists \vec q~.~(\vec a, \vec t) \sim^{c[\vbullet], m}_\to (\vec b, \vec q) 
~\land 
\  \configTwo{c[\vec b]}{m} \Downarrow 
\}$

This case is impossible. By the Control Backbone Lemma~\ref{lemma:control:backbone} there
must be two respective configurations $ \configTwo{c'[\vec a']}{m'}
$ and $\configTwo{c'[\vec b']}{s'}$ where $m'=_\Trusted s'$, such that $\configTwo{c'[\vec
  a']}{m'}$ leads to a release event, but $\configTwo{c'[\vec
  b']}{s'}$ terminates without release events. By analysis of $c'$, 
 similar to the Advancement Lemma, we conclude that none of the
 cases is possible.
\qed
\end{enumerate}

\section{Checked endorsement}
\label{sec:checked}

Realistic applications endorse attacker-provided data based on certain conditions. For
instance, an SQL string that depends on user-provided input is executed if it passes sanitization,
a new password is accepted if the user can provide an old one, and a secret key is accepted
if nonces match.  Because this is a recurring pattern in security-critical
applications, we argue for language support in the form of _checked endorsements_.
 
This section extends the language with checked endorsements and derives both security conditions
and a typing rule for them. Moreover, we show checked
endorsements can be decomposed into a sequence of direct endorsements,
and prove that for well-typed programs, the semantic conditions for
robustness are the same with checked endorsements and with unchecked endorsements.

\paragraph{Syntax and semantics} 
In the scope of this section, we assume checked endorsements are the only endorsement mechanism
in the language. We introduce a syntax for checked endorsements:
\[
c[\vbullet] ::=  \dots\ |\ \EndorseLab{\eta}{x}\ \ifThenelse{e}{c}{c}
\]
The semantics of this command is that a variable $x$ is endorsed if the expression $e$ evaluates to
true. If the check succeeds, the $\cod{then}$ branch is taken, and $x$ is assumed to have high
integrity there.  If the check fails, the $\cod{else}$ branch is taken.
As with direct endorsements, we assume checked endorsements in
program text have unique labels $\eta$. These labels may
be omitted from the examples, but they are explicit in the
semantics.

\draftnote{ Explain that check result is implicitly endorsed.}

\paragraph{Endorsement events}
{Checked endorsement events} 
$\mathit{checked}(\eta, v, b)$ record  the unique label of the endorsement command $\eta$, the value of variable that can potentially be endorsed $v$, and a result of the check $b$, which can be either 
0 or 1.
\begin{mathpar}
\inferrule{m(e) \downarrow v \and v \neq 0}{\configTwo{\EndorseLab{\eta}{x}\
 \ifThenelse{e}{c_1}{c_2}}{m} \labarrow{\mathit{checked}(\eta,m(x),1)}{} \configTwo{c_1}{m}}
\\
\inferrule{m(e) \downarrow v \and v = 0}{\configTwo{\EndorseLab{\eta}{x}\ 
    \ifThenelse{e}{c_1}{c_2}}{m} \labarrow{\mathit{checked}(\eta,m(x),0)}{} \configTwo{c_2}{m}}
\end{mathpar}
\paragraph{Irrelevant attacks} For checked endorsement we define a suitable notion of irrelevant attacks. The reasoning behind this is the following. 
\begin{enumerate}[(1)]
\item Both $\vec t$ and $\vec{t'}$ reach the same endorsement statement: $\eta_i = \eta'_i$.

\item At least one of them results in the positive endorsement: $b_i + b'_i \geq 1$. This ensures
that if both traces do not take the branch then none of the attacks are ignored.

\item The endorsed values are different: $v_i \neq v'_i$. Otherwise, there
should be no further difference in what the attacker can influence along the trace. 

\end{enumerate}
The following definitions formalize the above construction.

\begin{defi}[Irrelevant traces]
\label{def:irrelevant:traces:checked}
Given a trace $\vec t$, where endorsements are labeled as
$\mathit{checked}(\eta_j, v_j,b_j)$, define a set of irrelevant traces based on the number of
checked  endorsements in $\vec t$ as $\altidxcorrupt{i}{\vec t}$.
Then $\altidxcorrupt{0}{\vec t} = \emptyset$, and
\begin{align*}
\begin{split}
 \altidxcorrupt{i}{\vec t} = \{ & \vec{t'}\ |\ \vec{t'} = \vec q\ \tconcat \mathit{checked}(\eta_{i}, v'_i, b'_{i}) \tconcat  \vec{q'}
   \} \mbox {\ such that}\\
   &  \mbox{ $\vec q$ is a prefix of $\vec{t'}$ with $i-1$  $\mathit{checked}$ events, all of which agree with $\mathit{checked}$\ events in $\vec t$}, \\ 
   & \mbox{\ $(b_i + b'_i \geq 1) \land (v_i \neq v'_i)$, and} \\
   & \mbox{ $\vec{q'}$ contains no $\mathit{checked}$ events} \\
\end{split}
\end{align*}
Define \(
\altcorrupt{\vec t} \defn \bigcup_i  \altidxcorrupt{i}{\vec t}  \) as a set of \emph{irrelevant traces} w.r.t. $\vec t$.
\end{defi}

\begin{defi}[Irrelevant attacks ]
\label{def:irrelevant:attacks:checked}
$
\altCorrupt{c[\vbullet], m, \vec t} \defn
\{ \vec{a} |\ 
\configTwo{c[\vec a]}{m} \labarrowstar{}{\vec{t'}} \land
\vec{t'} \in \altcorrupt{\vec t}
\}
$
\end{defi}
Using this definition, we can define security conditions for checked robustness.

\begin{defi}[Progress-sensitive robustness with checked endorsement]
\label{def:robustness:checked:sensitive}
Program  $c[\vbullet]$ satisfies \emph{progress-sensitive robustness
  with checked endorsement} if for all memories $m$ and all attacks $\vec a$, 
such that
$
\configTwo{c[\vec a]}{m} \labarrowstar{}{\vec a} \configTwo{c'}{m'} \labarrowstar{}{\vec r}, 
$ 
and $\vec r$ contains a release event, i.e., 
$k(c[\vec a], m_\Public, \vec t_\Public) \supset k(c[\vec a], m_\Public, \vec t_\Public \tconcat \vec r_\Public)$,
we have \[R^\triangleright (c[\vbullet], m, \vec a, \vec t) \setminus \altCorrupt{c[\vbullet], m, \vec t \tconcat \vec r} \subseteq  R(c[\vbullet], m, \vec a, \vec t \tconcat \vec r)\]

\end{defi}\medskip
\noindent The progress-insensitive version is defined similarly, using progress-insensitive definition for release events and progress-insensitive versions of
 control and release control.
\paragraph{Example} In program
 $
[\bullet]; \EndorseLab{\eta_1}{u}\ \ifThenelse{u = u'}{ \mathit{low} := u  < h}{\Skip}
$,
the attacker can modify $u$ and $u'$. This program is insecure because
the unendorsed, attacker-controlled variable $u'$ influences the decision to declassify.
To see that Definition~\ref{def:robustness:checked:sensitive}
rejects this program, consider running it in
memory with $m(h) = 7$, and two attacks: $a_1$, where attacker sets $u := 5;u' := 0$, and $a_2$,
where attacker sets $u := 5;u' =5$. Denote the corresponding traces up to endorsement by
$\vec t_1$ and $\vec t_2$. We have
$\vec t_1 = [(u,5) \tconcat (u',0)] \tconcat \mathit{checked}(\eta_1, 5, 0) $ and 
$ \vec t_2 = [(u,5) \tconcat (u',5)] \tconcat \mathit{checked}(\eta_1, 5, 1)$.
Because endorsement in the second trace succeeds, this trace also continues with a low event 
$(\mathit{low}, 1)$.
Following Definition~\ref{def:irrelevant:traces:checked} 
we have that 
$t_1 \not\in \altcorrupt{\vec t_2 \tconcat (\mathit{low}, 1)}$, 
implying $a_1 \not \in \altCorrupt{c[\vbullet], m, \vec t_2 \tconcat (\mathit{low}, 1) }$. 
Therefore, $ a_1 \in R^\triangleright (c[\vbullet], m, \vec a_2, \vec t_2 )
\setminus \altCorrupt{c[\vbullet], m, \vec t_2 \tconcat (\mathit{low}, 1) }$. On the other hand, 
$a_1 \not\in R (c[\vbullet], m, \vec a_2, \vec t_2 \tconcat (\mathit{low}, 1))$ because $a_1$ can produce no 
 low events corresponding to $(\mathit{low}, 1)$.

\paragraph{Endorsing multiple variables}
The syntax for checked endorsements can be extended to multiple
variables with the following syntactic sugar,
where  $\eta_i$ is an endorsement label corresponding to variable $x_i$:

\begin{multline*}
\Endorse{x_1,  \dots x_n}\ \ifThenelse{e}{c_1}{c_2} \Longrightarrow 
 \EndorseLab{\eta_1}{x_1}\  \ifThenelse{e} {  \\ \EndorseLab{\eta_2}{x_2}\ \ifThenelse{\cod{true} }{ \dots  c_1 }{\Skip \dots } }{c_2}
\end{multline*}
Note that in this encoding the condition is checked as early as possible; an
alternative encoding here would check the condition in the end. While such encoding would have an
advantage of type checking immediately, we believe that checking the condition
as early as possible avoids spurious (albeit harmless in this simple context) endorsements of all
but the last variable, and is therefore more faithful semantically.

\paragraph{Typing checked endorsements} 
To enforce programs with checked endorsements, we extend the type
system with the following general rule:
\begin{mathpar}
\small
\inferrule[(T-CHECKED)]{
\G' \defn \G [x_i \mapsto \G( x_i ) \sqcap (\High, \Trusted ) ]  \and
\G' \vdash e: \level', D'
\and
\pc' \defn \pc \join \level'
\\
 \pc' \sqsubseteq (\High, \Trusted) \and
\and
\G' , \pc' \vdash c_1 
\and
\G, \pc' \vdash c_2 
}{
\G, \pc \vdash \Endorse{x_1, \dots, x_n}\  \ifThenelse{e}{c_1}{c_2}
}
\end{mathpar}
The expression $e$ is 
type-checked in an environment $\G'$ in which endorsed variables $x_1,
\dots x_n$ have trusted integrity;
its label $\level'$  is joined to form auxiliary $\pc$-label $\pc'$. The level of 
$\pc'$ must be trusted, ensuring that endorsements happen in a trusted context, and
that no declassification in $e$ depends on untrusted variables other
than the $x_i$ (this effectively subsumes the need to check individual variables in $D'$).
Each of the branches is type-checked
with the program label set to $\pc'$;
however, for $c_1$ we use the auxiliary typing environment $\G'$,
since the $x_i$ are trusted there.

Program
 $
[\bullet]; \Endorse{u}\ \ifThenelse{u = u'}{ \mathit{low} := \decl{u  < h}}{\Skip}
$
is rejected by this type system. Because variable $u'$ is not endorsed, the auxiliary $\pc$-label has untrusted integrity.

\draftnote{Do we need more examples here?}

\subsection{Relation to direct endorsements}
Finally, for well-typed programs we can safely translate checked endorsements to direct
endorsements using a translation in which a checked endorsement of $n$ variables is translated to 
$n+1$ direct endorsements. First, we unconditionally endorse the
result of the check. The rest of the endorsements happen in the
$\cod{then}$ branch, before translation of $c_1$.
We save the results of the endorsements in temporary variables $t_1
\dots t_n$
and replace all occurrences of $x_1 \dots x_n$ within $c_1$ with the
temporary ones 
(we assume that each $t_i$ has the same
confidentiality level as the corresponding original $x_i$, and $t_0$ has the
confidentiality level of the expression $e$). All other commands are translated to themselves.

\begin{defi}[Labeled translation from checked endorsements to direct endorsements]
\label{def:endorse:translation}
Given a program $c[\vbullet]$ that only uses checked endorsements, we define its labeled translation to direct endorsements $\trans{c[\vbullet]}$  
 inductively:

\begin{iteMize}{$\bullet$}

\item

$
 \trans{\EndorseLab{\eta}{x_1, \dots x_n}\ \ifThenelse{e}{c_1}{c_2} } \Longrightarrow t_0 := \EndorseLab{\eta_0}{e};
 \ifThenelse{t_0\\ ~~~~~~~~ }{ t_1:= \EndorseLab{\eta_1}{x_1}; \dots t_n := \EndorseLab{\eta_n}{x_n};  \trans{c_1 [t_i/x_i] }        
}{  \trans {c_2}}
$

\item
$\trans{c_1; c_2} \Longrightarrow \trans{c_1}; \trans{c_2} $

\item
$\trans{\ifThenelse{e}{c_1}{c_2}}  \Longrightarrow  \ifThenelse{e}{ \trans{c_1}}{\trans{c_2}}$ 

\item
$\trans{\while{e}{c}} \Longrightarrow \while{e}{\trans c} $ 

\item
$\trans c \Longrightarrow c $, for other commands $c$. 

\end{iteMize}

\end{defi}

\paragraph{Adequacy of translation for checked endorsements for well-typed programs}
Next we show adequacy of the labeled translation of
Definition~\ref{def:endorse:translation} for well-typed programs. Note that for non-typed programs this
adequacy does not hold, as shown by an example in the end of the section. 

Without loss of generality, we assume checked endorsements
have only one variable ($n=1$ in the translation of checked endorsement in
Definition~\ref{def:endorse:translation}). We adopt an indexing convention where
 checked endorsement with the label $\eta_i$, corresponds to two direct endorsements 
with the labels $\eta_{2i - 1}$ and $ \eta_{2i}$. 
The following lemma establishes a connection between irrelevant attacks of the source and translated runs.

\begin{lem}[Synchronized endorsements]
\label{lemma:translation:sync}
Given a program $c[\vbullet]$ that only uses checked endorsements, such that $\G, \pc \vdash c[\vbullet]$,  memory $m$, and attack $\vec a$,
such that
\[
 \configTwo{c[\vec a]}{m} \labarrowstar{}{\vec t} 
\mbox{\ and\ } 
\configTwo{\trans{c[\vec a]}}{m} \labarrowstar{}{\hat{\vec t}} 
\]
where 
\begin{iteMize}{$\bullet$}
\item
$\vec t = \vec{t'} \tconcat \mathit{checked} (\eta_i, v_i, b_i) $ and 
\item
  $\hat{\vec t} = \hat{\vec{t'}} \tconcat \mathit{endorse} (\eta_{2i-1}, 0) $ or $\hat{\vec t} = \hat{\vec{t'}} \tconcat \mathit{endorse} (\eta_{2i-1}, 1) \tconcat \mathit{endorse} (\eta_{2i}, v) $ 
\item $k$ is a number of checked endorse events in $\vec t$ and
\end{iteMize}
we have that 
\begin{iteMize}{$\bullet$}

\item
$R(c[\vbullet], m, \vec a, \vec t) = R(\trans{c[\vbullet]}, m, \vec a,
\hat{\vec t})$.  

\item
$R_\to(c[\vbullet], m, \vec a, \vec t) = R_\to(\trans{c[\vbullet]}, m, \vec a,
\hat{\vec t})$.

\item
$\Corrupt{\trans{c[\vbullet]}, m, \hat{\vec t}} = \altCorrupt{c[\vbullet], m, \vec t}$

\end{iteMize}
\end{lem}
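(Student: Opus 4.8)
The plan is to reduce all three equalities to a single \emph{simulation} between the runs of $c[\vec b]$ and the runs of $\trans{c[\vbullet]}$ with $\vec b$ plugged in, for an \emph{arbitrary} attack $\vec b$, and then to read off the consequences for knowledge, control and irrelevant attacks. Concretely, I would first prove, by induction on the structure of $c[\vbullet]$, that there is a step-preserving correspondence between the runs of $\configTwo{c[\vec b]}{m'}$ and of $\configTwo{\trans{c[\vbullet]}[\vec b]}{m'}$ such that corresponding runs (i) produce the same low events and the same termination/divergence behaviour — hence induce the same attacker knowledge, progress knowledge and divergence knowledge at each point; (ii) map each event $\mathit{checked}(\eta_i,v_i,0)$ to $\mathit{endorse}(\eta_{2i-1},0)$ and each $\mathit{checked}(\eta_i,v_i,1)$ to $\mathit{endorse}(\eta_{2i-1},1)\tconcat\mathit{endorse}(\eta_{2i},v_i)$; and (iii) keep the two memories in agreement on all variables of $c[\vbullet]$, with the translated memory recording on $t_i$ whatever the source memory records on $x_i$ inside the relevant $\cod{then}$-branch (this is where the capture-avoiding substitution $c_1[t_i/x_i]$ of Definition~\ref{def:endorse:translation} is used). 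The only non-routine case is the checked endorsement itself: $t_0:=\EndorseLab{\eta_0}{e}$ evaluates $e$ in exactly the memory used by the source rule, so the branch taken and the recorded value of $x_i$ coincide, and $t_i$ and $x_i$ carry the same value throughout $\cod{then}$; sequencing, conditionals and loops go through because $\trans{\cdot}$ is homomorphic, and holes are left untouched so attacker steps are literally identical on both sides.

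Given the correspondence, the equalities $R(c[\vbullet],m,\vec a,\vec t)=R(\trans{c[\vbullet]},m,\vec a,\hat{\vec t})$ and its progress-insensitive variant are almost immediate: attacker knowledge, progress knowledge, knowledge segmentation and the similarity relations $\sim^{c[\vbullet],m}$ and $\sim^{c[\vbullet],m}_\to$ are defined purely from the low-event projections of traces together with the termination/divergence predicates, all of which the correspondence preserves. Hence $(\vec a,\vec t)$ and $(\vec b,\vec q)$ are similar in $c[\vbullet]$ exactly when $(\vec a,\hat{\vec t})$ and $(\vec b,\hat{\vec q})$ are similar in $\trans{c[\vbullet]}$ for the image trace $\hat{\vec q}$ of $\vec q$, and the two control sets coincide.

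For $\Corrupt{\trans{c[\vbullet]},m,\hat{\vec t}}=\altCorrupt{c[\vbullet],m,\vec t}$ I would prove inclusion in both directions at the level of attacks, transporting witnessing traces across the correspondence. The $\supseteq$ direction is a direct case split on check results: if a source witness agrees with $\vec t$ on the first $i-1$ checked events and then satisfies $b_i+b'_i\geq 1$ and $v_i\neq v'_i$, its image agrees with $\hat{\vec t}$ on the corresponding direct endorsements and then disagrees either at $\eta_{2i-1}$ (when $b_i\neq b'_i$) or at $\eta_{2i}$ (when $b_i=b'_i=1$, using $v_i\neq v'_i$), so it lies in $\corrupt{\hat{\vec t}}$ in the sense of Definition~\ref{def:irrelevant:traces}; the ``$\vec q'$ contains no $\mathit{checked}$ events'' clause of Definition~\ref{def:irrelevant:traces:checked} transports automatically, and where $\corrupt{\hat{\vec t}}$ would allow extra trailing endorsements one simply truncates the run just after the disagreeing endorsement, which is still a run of the same attack. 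The $\subseteq$ direction is the crux: given an image witness that first diverges from $\hat{\vec t}$ at some direct endorsement, one locates the checked endorsement $i$ it belongs to; if the disagreement is at $\eta_{2i}$ one reads off $b_i=b'_i=1$ and $v_i\neq v'_i$ directly, but if it is at the \emph{check} endorsement $\eta_{2i-1}$ one only knows $b_i\neq b'_i$ and must still establish $v_i\neq v'_i$ to land in $\altcorrupt{\vec t}$.

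The main obstacle is exactly this last subcase, and it is where well-typedness is indispensable — the statement is false for untyped programs, since an untyped check may test an unendorsed untrusted variable, making disagreement on the check result possible without disagreement on the endorsed value. The argument I would give is a small integrity/determinacy lemma in the spirit of the Control Backbone and Advancement lemmas: for a well-typed $c[\vbullet]$, two runs from the same memory $m$ that agree on the first $i-1$ endorsement events agree on the trusted part of memory up to the $i$-th checked endorsement, because fairness and the typing of holes forbid attacker writes to trusted variables, trusted assignments occur only in trusted $\pc$-contexts from trusted expressions, and the only untrusted-to-trusted flows are the endorsements, which agree by hypothesis; and since rule (T-CHECKED) types the guard $e$ in the environment $\G'$ in which only $x_i$ is promoted to $\Trusted$, the value of $e$ is a function of the (now equal) trusted memory and of the value of $x_i$ alone, so $b_i\neq b'_i$ forces $v_i\neq v'_i$. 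Threading this invariant through the induction — together with the Advancement-style analysis (cf.\ Lemma~\ref{lemma:control:advancement} and Lemma~\ref{lemma:control:backbone}) for the remaining cases, and checking the subtle point that the temporary-variable substitution can perturb the literal values written to public variables only in typing-benign ways (no declassification may read the perturbed $x_i$, since outside the $\cod{then}$-branch it is untrusted and (T-ASGMT) forbids declassifying untrusted data), so that attacker knowledge is genuinely unaffected — is the bulk of the work.
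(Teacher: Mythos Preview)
Your proposal is correct and lands on the same key insight as the paper: the only non-trivial inclusion is $\Corrupt{\trans{c[\vbullet]},m,\hat{\vec t}}\subseteq\altCorrupt{c[\vbullet],m,\vec t}$ in the subcase where the disagreement occurs at the check endorsement $\eta_{2i-1}$, and there one must use (T-CHECKED) together with an integrity invariant to conclude $v_i\neq v'_i$ from $b_i\neq b'_i$. You spell this out exactly as the paper does (if $v_i=v'_i$ then, since the guard depends only on trusted memory and $x_i$, we would get $b_i=b'_i$, contradiction).

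The packaging differs somewhat. The paper dispatches the two control equalities in one line (``the translation does not generate new release events'') and organizes the irrelevant-attacks equality as an induction on the number $k$ of checked endorsements, introducing difference sets $F_k$ and $P_k$ and proving $F_k=P_k$ at each step. You instead front-load an explicit structural simulation between source and translated runs and then transport witnesses directly for both inclusions. Your route is more explicit---in particular, you articulate the trusted-memory agreement invariant that the paper invokes only implicitly when it writes ``by rule (T-CHECKED), we have $b_k=b'_k$''---while the paper's induction on $k$ is a slightly cleaner bookkeeping device for the irrelevant-attacks part. Neither approach buys anything the other does not; they are equivalent organizations of the same argument.
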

\begin{proof}

The first two items follow from the definition of the translation, because the translation does not generate new release events.

To prove the second item, we consider partitions of irrelevant traces generated by every $k$-th checked endorsement
 and the direct endorsement(s) that correspond to it. We proceed by induction on $k$. 
For the base case, $k= 0$, i.e.,  neither $\vec t$ nor $\hat{\vec t}$ 
contain endorsements, it holds that
$\Corrupt{\trans{c[\vbullet]}, m, \hat{\vec t}} = \altCorrupt{c[\vbullet], m, \vec t} = \emptyset$. 
For the inductive case, define a pair of auxiliary sets 
\begin{align*}
 &F_k \defn \Corrupt{\trans{c[\vbullet]}, m, \hat{\vec t}} \setminus 
\Corrupt{\trans{c[\vbullet]}, m, \hat{\vec{t'}}}\\
&  P_k \defn \altCorrupt{c[\vbullet], m, {\vec t}} \setminus 
\altCorrupt{c[\vbullet], m, {\vec{t'}}}
\end{align*}
By the induction hypothesis, 
$\Corrupt{c[\vbullet], m, \hat{\vec{t'}}} = \altCorrupt{c[\vbullet], m, \vec{t'}}$. 
By Definitions~\ref{def:irrelevant:attacks}
and~\ref{def:irrelevant:attacks:checked}, we know that
$\Corrupt{\trans{c[\vbullet]}, m, \hat{\vec t}} \supseteq 
\Corrupt{\trans{c[\vbullet]}, m, \hat{\vec{t'}}}$ and
$\altCorrupt{c[\vbullet], m, {\vec t}} \supseteq
\altCorrupt{c[\vbullet], m,{\vec{t'}}}$. Therefore, in order to prove that
$\Corrupt{\trans{c[\vbullet]}, m, \hat{\vec t}} = \altCorrupt{c[\vbullet], m,{\vec t}}$ it is sufficient to show 
that $F_k = P_k$. We  consider 
each direction of equivalence separately.
\begin{iteMize}{$\bullet$}
\item
$F_k \supseteq P_k$. 
Take  an attack $\vec b \in P_k$. That is
$\configTwo{c[\vec b]}{m}$ produces a trace $\vec q$ such that $\vec
q$ agrees on all checked endorsements with $\vec t$ except the last
one. There are three possible ways in which these endorsements may disagree:

\begin{enumerate}

\item
Trace $\vec t$ contains $\mathit{checked} (\eta_k, v_k, 1)$ and 
$\vec q$ contains $\mathit{checked} (\eta_k, v'_k, 1)$ such that $v_k \neq v'_k$. 
By the rules for the translation, it must be that the trace $\hat{\vec t}$, which is produced
by configuration $\configTwo{\trans{c[\vec a]}}{m}$, has two corresponding
endorsement events
$\mathit{endorse}(\eta_{2k - 1}, 1)$ and 
$\mathit{endorse}(\eta_{2k}, v_k)$. Similarly, the trace $\hat{\vec
  q}$,  produced by $\configTwo{\trans{c[\vec b]}}{m}$, has two 
corresponding endorsement events
$\mathit{endorse}(\eta_{2k - 1}, 1)$ and
$\mathit{endorse}(\eta_{2k}, v'_k)$. Because $v'_k \neq v_k$ we have
that $\hat{\vec q} \in \corrupt{\hat{\vec t}}$.

\item

Trace $\vec t$ contains checked endorsement event 
$\mathit{checked}(\eta_k, v_k, 1)$, 
while trace $\vec q$ contains event $\mathit{checked}(\eta_k, v'_k,
0)$. 
In this case, the trace $\hat{\vec t}$ obtained from running
$\configTwo{\trans{c[\vec a]}}{m}$ must contain two endorsement events
$\mathit{endorse}(\eta_{2k-1}, 1)$ 
and
$\mathit{endorse}(\eta_{2k}, v_k)$, while 
the trace $\hat{\vec q}$ corresponding to 
$\configTwo{\trans{c[\vec b]}}{m}$ contains one event
$\mathit{endorse}(\eta_{2k-1}, 0)$. Therefore, 
$\hat{\vec q} \in \corrupt{\hat{\vec t}}$. 

\item 
Trace $\vec t$ contains checked endorsement event 
$\mathit{checked}(\eta_k, v_k, 0)$, 
while trace $\vec q$ contains event $\mathit{checked}(\eta_k, v'_k,
1)$. This is similar to the previous case. 
\end{enumerate}

From $\hat{\vec q} \in \corrupt{\hat{\vec t}}$ it follows that 
$\vec b \in F_k$.

\item 
$F_k \subseteq P_k$. Take an attack $\vec b \in F_k$. There must be
a trace $\hat{\vec q}$,  produced by $\configTwo{\trans{c[\vec
    b]}}{m}$, 
such that
$\hat{\vec q} \in \corrupt{\hat{\vec t}}$.
There are two ways this can happen:
\begin{enumerate}
\item
$\hat{\vec q}$ and $\hat{\vec t}$ disagree at the translated
endorsement event that has label $\eta_{2k-1}$. More precisely, one
must have form $\mathit{endorse}(\eta_{2k-1}, 1)$ and
the other, $\mathit{endorse}(\eta_{2k-1}, 0)$.
In the original run, this corresponds to two traces $\vec t$ and $\vec
q$ such that $\vec t$ contains the event $\mathit{checked}(\eta_k, b_k,
v_k)$ and $\vec q$ contains the event $\mathit{checked}(\eta_k, b'_k,
v'_k)$. 
We know that $b_k = 1$ and $b'_k = 0$, and hence $ b_k + b'_k \leq 1$.
According to
Definition~\ref{def:irrelevant:traces:checked}, we need to show
that $v'_k \neq v_k$. Assume this is not the case, and 
that $v'_k = v_k$.  Then by rule (T-CHECKED), we have $b_k = b'_k$,
which contradicts the earlier conclusion. Hence $\vec q \in \altcorrupt{\vec t}$. 

\item Alternatively, $\hat{\vec q}$ and $\hat{\vec t}$ disagree at
  the endorsement event that has label $\eta_{2k}$. This also means that
  they agree on the earlier endorsement, i.e., for the corresponding
  trace with checked endorsement, we can show that $b_k = b'_k =
  1$, and  $v_k \neq v'_k$. Therefore, $\vec q \in \altcorrupt{\vec
    t}$. 
\end{enumerate}

From $\vec q \in \altcorrupt{\vec t}$ it follows that $\vec b \in
P_k$. 

\end{iteMize}

\end{proof}\smallskip

\noindent Using Lemma~\ref{lemma:translation:sync} we can show the following
Proposition, which
relates the security of the source and translated programs.
\begin{prop}[Relation of checked and direct endorsements]
\label{prop:checkedtodirect}
Given a program $c[\vbullet]$ 
that only uses checked endorsements such that $\G, \pc \vdash c[\vbullet] $, then 
 $c[\vbullet]$ satisfies progress-insensitive robustness for checked endorsements
if and only $\trans{c[\vbullet]}$ satisfies progress-insensitive robustness for direct endorsements. 
\end{prop}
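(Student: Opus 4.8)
The plan is to reduce the statement about robustness of $c[\vbullet]$ and $\trans{c[\vbullet]}$ to the three equalities supplied by Lemma~\ref{lemma:translation:sync}. The key observation is that progress-insensitive robustness with checked endorsements (resp.\ with direct endorsements) is stated entirely in terms of three ingredients: the attacker control $R_\to$, the release control $R^\triangleright_\to$, and the irrelevant-attack set $\altCorrupt{\cdot}$ (resp.\ $\Corrupt{\cdot}$); moreover the quantification is over attacks $\vec a$, memories $m$, and traces $\vec t \tconcat \vec r$ with $\vec r$ a release event. So the proof strategy is: fix $m$, $\vec a$, and a trace of $\configTwo{c[\vec a]}{m}$ containing a release event, and put it in correspondence, via the translation, with a trace of $\configTwo{\trans{c[\vec a]}}{m}$ containing the ``same'' release event; then show the defining inclusion holds for one trace iff it holds for the corresponding one.

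First I would establish the trace correspondence: since $\trans{\cdot}$ only inserts $\mathit{endorse}$ steps (which carry no low events and no release events) in place of each $\mathit{checked}$ step, every run $\configTwo{c[\vec a]}{m} \labarrowstar{}{\vec t}$ is matched by a run $\configTwo{\trans{c[\vec a]}}{m} \labarrowstar{}{\hat{\vec t}}$ with the same sequence of low events, and conversely; in particular $k_\to$ and $k$ on $\vec t_\Public$ coincide with those on $\hat{\vec t}_\Public$, so ``$\vec r$ is a progress-insensitive release event after $\vec t$'' holds iff it holds after $\hat{\vec t}$. This lets me line up the universally quantified hypotheses of the two robustness definitions. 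Next, given such a matched pair, I apply Lemma~\ref{lemma:translation:sync} with the prefix $\vec t$ (the part before the release event $\vec r$, extended to end at a $\mathit{checked}$ event where needed) to get $R_\to(c[\vbullet],m,\vec a,\vec t) = R_\to(\trans{c[\vbullet]},m,\vec a,\hat{\vec t})$, the same for $R^\triangleright_\to$ (which is built from $R_\to$ plus the release-or-termination clause, and that clause is preserved since low events, release events and termination behaviour are preserved by the translation), and $\altCorrupt{c[\vbullet],m,\vec t\tconcat\vec r} = \Corrupt{\trans{c[\vbullet]},m,\hat{\vec t}\tconcat\hat{\vec r}}$. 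Substituting these three equalities into the inclusion $R^\triangleright_\to \setminus (\text{irrelevant}) \subseteq R_\to$ turns the checked-endorsement instance of the inclusion into the direct-endorsement instance and vice versa, and since the quantifiers range over corresponding objects on the two sides, the two robustness properties are equivalent.

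The main obstacle is the bookkeeping around Lemma~\ref{lemma:translation:sync}'s hypotheses: that lemma is phrased for traces $\vec t$ ending exactly at a $\mathit{checked}$ event (with $\hat{\vec t}$ ending at the matching one or two $\mathit{endorse}$ events), whereas the robustness definition quantifies over arbitrary prefixes $\vec t$ before a release event, which need not end at an endorsement. I would handle this by noting that between two consecutive endorsement events the translation is the identity, so $R_\to$, $R^\triangleright_\to$ and the irrelevant sets do not change in a way that breaks the correspondence on such ``endorsement-free'' stretches; formally, one writes $\vec t = \vec t_0 \tconcat \vec t_1$ where $\vec t_0$ ends at the last endorsement event in $\vec t$ and $\vec t_1$ is endorsement-free, applies the lemma at $\vec t_0$, and extends along $\vec t_1$ using that $\altCorrupt{\cdot}$ depends only on the endorsement events of its trace argument (Definitions~\ref{def:irrelevant:traces:checked} and~\ref{def:irrelevant:attacks:checked}) while $R_\to$, $R^\triangleright_\to$ are determined by the low events, which match. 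A secondary subtlety is making sure the release event $\vec r$ itself contains no $\mathit{checked}$ event (so that $\hat{\vec r}$ is just $\vec r$); if it did, one first splits $\vec r$ at that endorsement and absorbs the prefix into $\vec t$. Once these alignments are in place the equivalence is immediate from the three equalities, so the body of the proof is short.
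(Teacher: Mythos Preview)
Your approach is correct in spirit but genuinely different from the paper's, and it is worth noting the contrast.

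The paper does \emph{not} prove the biconditional by transferring the robustness inclusion back and forth along the trace correspondence. Instead, it observes that the translation preserves typing, so $\G,\pc \vdash \trans{c[\vbullet]}$; then Proposition~\ref{prop:control:soundness} (soundness of the type system) immediately gives that $\trans{c[\vbullet]}$ satisfies progress-insensitive robustness with direct endorsements. For the source program, it invokes Lemma~\ref{lemma:translation:sync} once to conclude that the control and irrelevant-attack sets are ``in sync'', so $c[\vbullet]$ satisfies progress-insensitive robustness with checked endorsements as well. Since both sides of the biconditional are established outright under the well-typedness hypothesis, the ``if and only if'' is immediate. In other words, the paper's argument exploits the assumption $\G,\pc \vdash c[\vbullet]$ to make one direction a corollary of the soundness theorem, avoiding almost all of the trace-alignment bookkeeping you describe.

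Your route---matching an arbitrary release-event prefix $\vec t \tconcat \vec r$ with its translated counterpart $\hat{\vec t}\tconcat\hat{\vec r}$, then substituting the equalities from Lemma~\ref{lemma:translation:sync} into the defining inclusion---yields a genuine equivalence argument that does not rely on Proposition~\ref{prop:control:soundness}. That is more informative (it would survive even if the type system were not known to be sound), but it requires the extra work you correctly identify: Lemma~\ref{lemma:translation:sync} is stated only for traces ending at an endorsement event, and it does not mention $R^\triangleright_\to$ at all, so you must separately argue that the equalities extend across endorsement-free suffixes and lift to release control. These extensions are plausible (low events, release events, and termination are untouched by the translation), but they are additional lemmas you would have to state and prove; the paper sidesteps them entirely by leaning on the type system.
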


\begin{proof}
Note that our translation preserves typing:
when $\G, \pc \vdash c[\vbullet]$, then
 $\G, \pc
\vdash \trans{c[\vbullet]}$. Therefore, by
Proposition~\ref{prop:control:soundness} 
the translated program satisfies progress-insensitive robustness with endorsements. 
To show that the source program satisfies the progress-insensitive 
robustness with checked endorsements, we use
Lemma~\ref{lemma:translation:sync} and note that the corresponding sets
of irrelevant attacks and control between any two runs of the programs
must be in sync.
\end{proof}

\paragraph{Notes on the adequacy of the translation} We observe two facts about the adequacy of this
translation. First,
for non-typed programs, the relation does not hold. For instance,
a program like \[ 
[\bullet]; \Endorse{u}\ \ifThenelse{u = u'}{
  \mathit{low} := \decl{u < h}}{\Skip} \]
does not satisfy Definition~\ref{def:robustness:checked:sensitive}.
However, translation of this program satisfies Definition~\ref{def:robustness:qualified:sensitive}. 

Second, observe that omitting endorsement of the expression would lead to occlusion. Consider an
alternative translation that endorses only the variables $x_1, \dots
x_n$ but not the result of the whole expression. Using such a translation, a program
\[\ifThenelse{u \cdot 0  > 0}{\Skip}{\Skip}; \mathit{trusted}:=x\]
is translated to 
\[\mathit{temp} := x; \ifThenelse{ t \cdot 0 > 0}{\Skip}{\Skip}; \mathit{trusted} := x\] However,
while the first program does not satisfy Definition~\ref{def:robustness:checked:sensitive}, 
the second program is accepted by Definition~\ref{def:robustness:qualified:sensitive}.

\section{Attacker impact}
\label{sec:attacker-impact}

In prior work, robustness controls the attacker's ability to
cause information release. In the presence of endorsement, the
attacker's ability to influence trusted locations also becomes
an important security issue.
To capture this influence,  we introduce an integrity dual
to attacker knowledge, called _attacker impact_.
Similarly to low events, we define \emph{trusted events}
as assignments to trusted variables and termination.

\begin{defi}[Attacker impact ]
  Given a program $c[\vbullet]$, memory $m$, and trusted events  $\vec t_\star$, define $p
  (c[\vbullet], m, \vec t_\star)$ to be a set of attacks $\vec a$ that match
  trusted events  $\vec t_\star$:
\[
p(c[\vbullet], m, \vec t_\star ) \defn \{\vec a \ |\ \configTwo{c[\vec a]}{m} 
\labarrowstar{}{\vec{t'}} \land \vec t_\star = \vec{t'}_\Trusted \}
 \]\vspace{-3 pt}
\end{defi}

\noindent Attacker impact is defined with respect to a given sequence of trusted events $\vec t_\star$, starting
 in memory
$m$, and program $c[\vbullet]$. The impact is the set of all attacks that agree with $\vec t_\star$ in
their footprint on trusted variables.  

Intuitively, a smaller set for attacker impact means that the attacker has greater power to influence trusted events. 
\draftnote{ Draw further parallels between knowledge and impact. }
Similarly to progress knowledge, we define _progress impact_,
characterizing which attacks lead to one more trusted event.
This then allows us to define robustness conditions for _integrity_,
which have not previously been identified.
\begin{defi}[Progress impact]
Given a program $c[\vbullet]$, memory $m$, and  sequence of trusted
events  $\vec t_\star$, define progress impact $p_\to(c[\vbullet], m, \vec t_\star)$ as
\[
p_\to(c[\vbullet], m, \vec t_\star) 
\defn 
\{
\vec a\ |\ 
\configTwo{c[\vec a]}{m}
\labarrowstar{}{\vec{t'}}
\configTwo{c'}{m'}
\land \vec t_\star = \vec{t'}_\Trusted
\land
\configTwo{c'}{m'}
\labarrowstar{}{t''_\star}
\}
\]\vspace{-5 pt}
\end{defi}
\noindent The intuition for the baseline robustness for integrity  is that attacker should
not influence trusted data. This is similar to noninterference for
integrity (modulo availability attacks, which have not been explored in
this context before). However unlike earlier work, we can easily
extend the notion of integrity robustness to
endorsements and checked endorsements.

\begin{defi}[Progress-insensitive integrity robustness with endorsements]
A program $c[\vbullet]$ satisfies progress-insensitive robustness for integrity if 
for all memories $m$, and for all traces $\vec t \tconcat t_\star$ where $t_\star$ is a trusted event,
we have 
\[
p_\to(c[\vbullet], m, \vec t_{\Trusted}) \setminus  \Corrupt{c[\vbullet], m, \vec t \tconcat t_\star} \subseteq  p(c[\vbullet], m, \vec t_{\Trusted} \tconcat t_\star) 
\]
\end{defi}\vspace{7 pt}

\noindent Irrelevant attacks are defined precisely as in Section~\ref{sec:endorsement}.
We omit the corresponding definitions for programs without endorsements and with checked endorsements.\newpage

The type system of Section~\ref{sec:enforcement} also enforces 
integrity robustness with endorsements, rejecting insecure
programs such as $t := u$ and $"if"~(u_1)~"then"~t := "endorse"(u_2)$,
but accepting $t := "endorse"(u)$.
Moreover, a connection between
checked and direct endorsements, analogous to
Proposition~\ref{prop:checkedtodirect}, holds for integrity robustness too.

\section{Examples}
\label{sec:examples}

\paragraph{Password update}
Figure~\ref{fig:example:passwordupdate} shows code for 
updating a password. The attacker controls variables
$\cod{guess}$ of level $(\Public, \Untrusted)$  and
$\cod{new\_password}$ of level $(\Secret, \Untrusted)$. 
The variable $\cod{password}$ has level $(\Secret, \Trusted)$ and
variables $\cod{nfailed}$ and $\cod{ok}$ have level
$(\Public,\Trusted)$.
The declassification on line~\ref{example:password:declassify} uses
the untrusted variable $\cod{guess}$. This variable, however, is
listed in the $\cod{endorse}$ clause on
line~\ref{example:password:endorse}; therefore, the declassification
is accepted.  The initially untrusted variable
$\cod{new\_password}$ has to be endorsed to update the password on
line~\ref{example:password:update}. The example also shows how other
trusted variables---$\cod{nfailed}$ and $\cod{ok}$---can be updated in
the $\cod{then}$ and $\cod{else}$ branches. 

\vspace{-0.5ex}
\paragraph{Data sanitization}
Figure~\ref{fig:embargoed} shows an annotated version of the code from
the introduction, in which some information ("new_data") is not
allowed to be released until time "embargo_time".
The attacker-controlled variable is "req_time" of 
level $(\Public, \Untrusted)$, and "new_data" has level $(\Secret,
\Trusted)$. The checked
endorse ensures that the attacker cannot violate the integrity
of the test "req_time >= embargo_time". (Variable
"now" is high-integrity and contains the current
time). Without the checked endorse, the release of "new_data" would
not be permitted either semantically or by the type system.

\begin{figure}[t]%
\lstset{numbers=left, xleftmargin=20pt, framexleftmargin=20pt}
\begin{minipage}{0.48\textwidth}
\begin{lstlisting}
[$\bullet$]
endorse(guess, new_password) /*@ \label{example:password:endorse} @*/
 if (declassify(guess==password)) /*@ \label{example:password:declassify} @*/
  then 
    password = new_password;  /*@ \label{example:password:update} @*/
    nfailed  = 0;
    ok = true;
  else
    nfailed  = nfailed + 1; 
    ok = false;
\end{lstlisting}
\caption{Password update}
\label{fig:example:passwordupdate}
\end{minipage}
~
\begin{minipage}{0.50\textwidth}
\vspace{1.2em}
\begin{lstlisting}
[$\bullet$]
endorse(req_time)
if (req_time <= now)
 then 
  if (req_time >= embargo_time)
   then return declassify(new_data)
   else return old_data
 else
   return old_data
\end{lstlisting}
\caption{Accessing embargoed information}
\label{fig:embargoed}
\end{minipage}
\end{figure}

\section{Related work} 
\label{sec:related}

Prior robustness definitions~\cite{Myers:Sabelfeld:Zdancewic:JCS06,Chong:Myers:CSFW06},
based on equivalence of low traces, do
not differentiate programs such as $[\bullet ]; \mathit{low} := u <
h; \mathit{low'} := h$ and $[\bullet]; \mathit{low'} := h;
\mathit{low} := u <h$;  
Per dimensions of information release~\cite{Sabelfeld:Sands:JCS}, the
new security conditions cover not only the ``who'' dimension, but are also
  sensitive to ``where'' information release happens. Also, the security condition
  of robustness with endorsement does not suffer from the occlusion problems of
  qualified robustness.
 Balliu and Mastroeni~\cite{Balliu:Mastroeni:PLAS09}
 derive sufficient conditions for robustness using weakest
 precondition semantics.  These conditions are not precise
 enough to distinguish the examples above, and, moreover, do not
 support endorsement.

Prior work on robustness semantics defines
termination-insensitive security
conditions \cite{Myers:Sabelfeld:Zdancewic:JCS06,Balliu:Mastroeni:PLAS09}.
Because the new framework is powerful enough to capture
the security of programs with intermediate observable events, it
can describe the robustness of nonterminating programs. Prior work
on qualified robustness~\cite{Myers:Sabelfeld:Zdancewic:JCS06}
uses a non-standard _scrambling_ semantics in which
qualified robustness unfortunately becomes a _possibilistic_
condition, leading to 
anomalies such as reachability of dead code.
The new framework avoids such artifacts because it
uses a standard, deterministic semantics.

Checked endorsement was introduced informally in the Swift web application
framework~\cite{Chong+:SOSP07} as a convenient way
to implement complex security policies. The current paper is the first
to formalize and to study the properties of checked endorsement.

Our semantic framework is based on the definition of attacker
knowledge, developed in prior work introducing _gradual release_~\cite{Askarov:Sabelfeld:SP07}.
Attacker knowledge is used for expressing confidentiality
policies in recent work~\cite{Banerjee+:SP08,Askarov+:Termination,Askarov:Sabelfeld:CSF09,Broberg:Sands:PLAS09}. 
However, none of this work considers integrity;
applying attacker-centric reasoning to
integrity policies is novel.

\section{Conclusion}
\label{sec:conclusion}

We have introduced a new knowledge-based framework for semantic
security conditions for information security with declassification and
endorsement.  A key technical innovation is characterizing the impact and
control of the attacker over information in terms of sets of similar
attacks. Using this framework, we can express semantic conditions that
more precisely characterize the security offered by a security type
system, and derive a satisfactory account of new language features
such as checked endorsement.
 
\bibliographystyle{custom}
\bibliography{literature,pm-master}

\section*{Acknowledgments}

The authors would like to thank the anonymous reviewers
for comments on a draft of this paper. We also thank Owen Arden,
Stephen Chong, Michael Clarkson, Daniel Hedin, Andrei Sabelfeld, and
Danfeng Zhang for useful discussions.

This work was supported by a grant from the Office of Naval Research
(N000140910652) and by two NSF grants (the TRUST center, 0424422;
and 0964409).  The U.S. Government is authorized to reproduce and
distribute reprints for Government purposes, notwithstanding any
copyright annotation thereon.  The views and conclusions contained
herein are those of the authors and should not be interpreted as
necessarily representing the official policies or endorsement, either
expressed or implied, of any of the funding agencies or of the U.S.\@
Government.

\end{document}

